\theoremstyle{nonumberplain}
	\newtheorem{proof}{Proof.}}
\newtheorem{theorem}{Theorem}[section]
\newtheorem{lemma}[theorem]{Lemma}
\newtheorem{conjecture}[theorem]{Conjecture}
\newtheorem{corollary}[theorem]{Corollary}
\author{Caishi Fang\thanks{thinliber@gmail.com} }
\affil{Centre for Quantum Software and Information,\\
	University of Technology Sydney, Australia}
\title{Accelerations for Graph Isomorphism }
\date{\vspace{-5ex}}
\begin{document}

\maketitle

\begin{abstract}
In this paper, we present two main results. First, by only one conjecture (\Cref{sym-pat}) for recognizing a vertex symmetric graph, which is the hardest task for our problem, we construct an algorithm for finding an isomorphism between two graphs in polynomial time $ O(n^{3}) $. Second, without that conjecture, we prove the algorithm to be of quasi-polynomial time $ O(n^{1.5\log n}) $. The conjectures in this paper are correct for all graphs of size no larger than $ 5 $ and all graphs we have encountered. At least the conjecture for determining if a graph is vertex symmetric is quite true intuitively. We are not able to prove them by hand, so we have planned to find possible counterexamples by a computer. We also introduce new concepts like collapse pattern and collapse tomography, which play important roles in our algorithms. 
\end{abstract}

\section{Introduction}
Currently, the best general algorithm for graph isomorphism problem is due to Babai \cite{babai2016graph}, who shows that the graph isomorphism is of quasi-polynomial time $ \exp((\log n)^{O(1)}) $. We give a constructive proof of this result in the current paper. And by only one conjecture, which is quite true intuitively, we give a polynomial time algorithm for the problem. This also means that we have reduced the graph isomorphism problem in polynomial time to the problem of determining whether a graph is vertex symmetric or not.

A detailed review is needed on the applications and related problems such as group isomorphism. The author has not seen Babai's \cite{babai2016graph} work in very detail. There may be some common techniques between this paper and previous papers not pointed out, which is another reason for the need of a review paper. Currently, if you want to know more about the origin and research history of the graph isomorphism problem, please refer to \cite{miller1979graph,bunke2000graph,babai2016graph,kobler2012graph}.\\

In this paper, $ G=(V,E) $ means an undirected graph $ G $ with a vertex set $ V $ and an edge set $ E $ and without self-loop or multiple edges connecting two vertexes. The case that graphs containing self-loop and multi-edges is discussed after we have presented the main results. Let $ V=\{v_1, v_2, \cdots, v_n\} $. If there is a direct connecting between $ v_i $ and $ v_j $ for $ v_i, v_j\in V $ and $ v_j\neq v_i $, we denote it by $ (v_i, v_j) $. Let $ T $ be the set of all $ (i,j) $ pairs with $ (v_i, v_j) $. $ E=\{(v_i,v_j):(i,j)\in T \} $. We denote the size of $ V $ as $ |V| $. Let $ V_j=\{v_k\in V: (v_j, v_k)\in E \} $. The \emph{degree of a vertex} $ v_j\in V $ is $ |V_j| $.

Given two graphs $ G_1=(V_1, E_1) $ and $ G_2=(V_2, E_2) $, if there is a one-to-one correspondence $ \pi $ between $ V_1 $ and $ V_2 $, s.t., for all $ v_i $, $ v_j\in V_1 $, $ (v_i,v_j)\in E_1 $ iff $ (\pi(v_i), \pi(v_j))\in E_2 $, then we say $ G_1 $ is \emph{isomorphic} to $ G_2 $ and $ \pi $ is a graph isomorphism between $ G_1 $ and $ G_2 $. Any isomorphism from a graph $ G $ to itself is called the \emph{automorphism} of $ G $. Note that, as the identity permutation is always an automorphism for any graph, we are not interested in this trivial automorphism. For the graph isomorphism and automorphism, we have an intuitive understanding, i.e., all directly connected vertexes must be also directly connected after an isomorphic or automorphic mapping, which is a permutation of vertex names.

Now we define several problems.

The \emph{graph isomorphism problem}, denoted as $ GI(G_1, G_2) $: determine whether $ G_1 $ and $ G_2 $ are isomorphic.

The \emph{graph automorphism problem}, denoted as $ GA(G) $: determine if there exists a non-trivial automorphism of the undirected graph $ G $.

The \emph{graph automorphism counting problem}, denoted as $ \# GA(G) $: find the total number of automorphisms of the undirected graph $ G $.

If we can solve a problem $ P $ by using polynomially many times of the procedure for solving another problem $ Q $, we say $ P $ is \emph{polynomially reducible} to $ Q $. If $ P $ and $ Q $ are polynomially reducible to each other, we say they are \emph{polynomially equivalent}.

It is shown $ GI(G_1, G_2) $ is polynomially equivalent with $ \# GA(G) $ \cite{mathon1979note,kobler1993graph}, and $ GA(G) $ is polynomially reducible to $ GI(G_1, G_2) $ \cite{kobler1993graph}.

It seems $ \# GA(G) $ is much harder than $ GA(G) $. Just take the complete graph as an example, the total number of automorphisms is $ n!-1 $, which is hard to find one by one. With an oracle for $ GI(G_1, G_2) $, it will be easy to do $ \# GA(G) $.

In this paper, we introduce two more problems. One is the \emph{graph automorphism with constraint problem}, denoted as $ GA(G, C) $: determine if there exists a non-trivial automorphism of the undirected graph $ G $, with the constraint $ C $. For $ v_1 $, $ v_2\in G $, $ (v_1, v_2)\in C $ if $ v_1 $ and $ v_2 $ cannot replace each other under any permutation or $ \langle v_1,v_2\rangle\in C $ if $ v_1 $ and $ v_2 $ can only correspond to each other in any automorphic mapping. 

The other is the \emph{graph isomorphism with constraint problem}, denoted as $ GI(G_1,G_2, C) $: determine whether $ G_1 $ and $ G_2 $ are isomorphic, with the constraint $ C $. For $ v_1\in G_1 $, $ v_2\in G_2 $, $ (v_1, v_2)\in C $ if $ v_1 $ and $ v_2 $ cannot replace each other under any mapping or $ \langle v_1,v_2\rangle\in C $ if $ v_1 $ and $ v_2 $ can only correspond to each other in any isomorphic mapping.

It is easy to see that (1) $ GI(G_1,G_2) $ is a special case of $ GI(G_1,G_2,C) $; (2) $ GA(G) $ is a special case of $ GA(G, C) $; (3) $ GI(G_1,G_2) $ is equivalent to $ GA(G_1\cup G_2, C) $, with $ C=\{ \langle v_1,v_2\rangle: v_1, v_2\in G_1\text{ or }v_1, v_2\in G_2 \} $; (4) $ GA(G, C) $ seems easier than $ GA(G) $, as some permutations are ruled out by the constraint $ C $. In our polynomial algorithm for $ GI(G_1,G_2) $, we make use of $ GI(G_1,G_2,C) $ as a subroutine.

In \cite{mathon1979note}, it is said that the checking and counting of graph isomorphism are polynomially equivalent, which is an evidence to the conjecture that the graph isomorphism is not $ \mathbf{NP} $-complete.

In \cite{booth1979problems,kobler1993graph}, they define a complexity class $ \mathbf{GI} $ of all problems polynomially reducible to the graph isomorphism problem, and claim that $ \mathbf{GI}=\mathbf{P} $ if the graph isomorphism is in class $ \mathbf{P} $. 

We introduce concepts and theoretic work, including our conjectures, in Section \ref{prepa}. The algorithm for graph isomorphism is described in Section \ref{algo}. The final section is the conclusion. The reader may skip \Cref{var-pat} if not interested in too much theoretical work.

\section{Preparations}\label{prepa}

Given two graphs $ G=(V,E) $ and $ G'=(V',E') $, with $ |V|=|V'| $, our target is to find a one-to-one correspondence $ \pi: V\rightarrow V' $, s.t., \[ \text{for all }v_i,\ v_j\in V,\ (v_i,v_j)\in E\text{ iff }(\pi(v_i),\pi(v_j))\in E'. \]

In this section, $ V=\{v_1,v_2,\cdots, v_n \} $, $ V'=\{v'_1,v'_2,\cdots,v'_n \} $; $ V_i=\{v_j\in V: (v_i, v_j)\in E \} $, $ V'_i=\{v'_j\in V': (v'_i, v'_j)\in E' \} $, for $ i=1,\ 2,\ \cdots,\ n $. 

The first information we can use is that $ |V_i|=|V'_j| $ if $ \pi $ is an isomorphism and $ \pi(v_i)=v'_j $, i.e., in any isomorphism between $ V $ and $ V' $, a vertex of $ V $ can only be mapped to a vertex of $ V' $ with the same degree. For this reason, we introduce a concept called \emph{base subgraph} $ G^{(w)} $ for those vertexes of the same degree $ w $ in a graph $ G $: 
\[ G^{(w)}=(V^{(w)},E^{(w)}),\ V^{(w)}=\{v_i\in V: |V_i|=w \}, \] \[ E^{(w)}=\{(v_i,v_j)\in E: v_i, v_j\in V^{(w)} \}. \]
Note that it is only possible to map a $ v_i\in V^{(w)} $ to some $ v'_j\in {V'}^{(w)} $.

Given $ G^{(w)} $ and $ v_i\in V^{(w)} $, we define the \emph{extension} based on $ v_i $, \[ G^{ex}(v_i,w)=(V^{ex}(v_i,w),E^{ex}(v_i,w)), \] as follows: 
\begin{enumerate}	
	\item $ v_i\in V^{ex}(v_i,w) $; 
	\item $ V_i-V^{(w)}\subseteq V^{ex}(v_i,w) $;
	\item For any $ v_j\in V^{ex}(v_i,w)-V^{(w)} $, $ V_j\subseteq V^{ex}(v_i,w) $;
	\item $ E^{ex}(v_i,w)=\{(v_j,v_k)\in E: v_j\in V^{ex}(v_i,w)-V^{(w)}, v_k\in V^{ex}(v_i,w) \} $.
	
\end{enumerate} 

We call $ v_i $ the \emph{base point} of the extension $ G^{ex}(v_i,w) $.

The base subgraph $ G^{(w)} $ is a separation of the the graph $ G $, which means vertexes inside the base subgraph are different from those outside. As such separation is not limited to the degree argument, we can generalize the concepts, base subgraph and extension, to a given set of vertexes $ \beta\subsetneq V $. The base subgraph of $ \beta $ is $ G^{\beta}=(V^{\beta},E^{\beta}) $, where $ V^{\beta}=\beta $. The extension of $ G^{\beta} $ based on $ v_i $ is $ G^{ex}(v_i,\beta) $. We just replace $ V^{(w)} $ by $ \beta $ in the definition of $ G^{(w)} $ and $ G^{ex}(v_i,w) $.

As the definitions are not so intuitive, we explain them by an example. Let's consider the graph in Figure \ref{fig:1}. The base subgraph of degree $ 3 $ is depicted in Figure \ref{fig:g005}. All extensions with respect to this base subgraph are depicted in Figure \ref{fig:g011}, \ref{fig:g012}. Note we have colored all base points in black.

\begin{figure}[h]
	\centering
	\begin{subfigure}[b]{0.4\textwidth}
		\centering
		\includegraphics[width=0.9\textwidth]{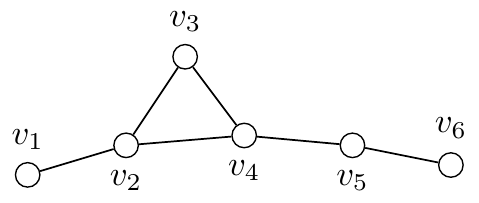}
		\caption{}
		\label{fig:1}
	\end{subfigure}
	\begin{subfigure}[b]{0.25\textwidth}
		\centering
		\includegraphics[width=0.5\textwidth]{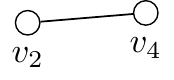}
		\caption{}
		\label{fig:g005}
	\end{subfigure}
	\begin{subfigure}[b]{0.25\textwidth}
		\centering
		\includegraphics[width=0.7\textwidth]{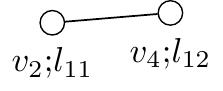}
		\caption{}
		\label{fig:g013}
	\end{subfigure}	
	
	\begin{subfigure}[b]{0.4\textwidth}
		\centering
		\includegraphics[width=0.5\textwidth]{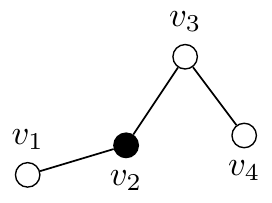}
		\caption{}
		\label{fig:g011}
	\end{subfigure}
	\begin{subfigure}[b]{0.4\textwidth}
		\centering
		\includegraphics[width=0.7\textwidth]{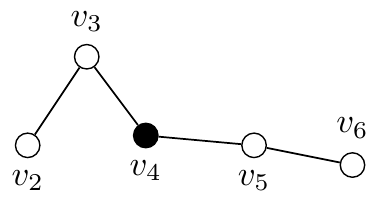}
		\caption{}
		\label{fig:g012}
	\end{subfigure}
	
	\caption{(a) $ G_a $; (b) $ G_a^{(3)} $; (c) $ G_a^{(3)} $ with labels; (d) $ G_a^{ex}(v_2,3) $; (e) $ G_a^{ex}(v_4,3) $ }
	\label{fig:2}
\end{figure}

Now we introduce some sets of \emph{labels} $ L_k=\{l_{k1}, l_{k2}, l_{k3}, \cdots \} $, for $ k=1,2,\cdots $. The total number of labels used in an algorithm will be finite. We use labels to replace extensions with respect to a base subgraph. The labels serve as carrying on the information of whether two extensions are isomorphic or not. If two extensions are isomorphic, we replace them with the same label, otherwise different labels. After pair-wise comparing Figure \ref{fig:g011}, \ref{fig:g012}, we find that they are not isomorphic with the constraint that a base point (black point) can only be mapped to a base point, so we label them differently. In Figure \ref{fig:g013}, we focus on the base subgraph of degree 3, the base point $ v_2 $ now has a label $ l_{11} $ and the base point $ v_4 $ has a label $ l_{12} $. It is better to write labels of a base point in a predefined order and combine identical labels, e.g., write $ l_{23} l_{11} l_{12} l_{12} $ as $ l_{11} l_{12}^2 l_{23} $. We will learn more about the labeling procedure in the algorithm for graph isomorphism.  

We say a graph is \emph{vertex regular} of $ w $ if every vertex in the graph is of the same degree $ w $. 

Given a graph $ G=(V,E) $, with $ v_i $ and $ V_i $ for $ i= 1, 2, \cdots, n $, we define the \emph{collapse} of $ G $ with the \emph{trigger} $ v_k $, $ G^{col}(v_k) $, as follows:
\begin{enumerate}
	\item Layer $ 0 $: $ G^{col}(v_k,0) $, $ V^{col}(v_k,0)=\{v_k\} $, $ E^{col}(v_k,0)=\varnothing $;
	\item Layer $ 1 $: $ G^{col}(v_k,1) $, $ V^{col}(v_k,1)=V_k $, $ E^{col}(v_k,1)=\{(v_i,v_j)\in E: v_i,v_j\in V_k \} $;
	\item Layer $ i+2 $: Given Layer $ i+1 $ ($ G^{col}(v_k,i+1) $) and Layer $ i $ ($ G^{col}(v_k,i) $), Layer $ i+2 $ is $ G^{col}(v_k,i+2) $, with $ V^{col}(v_k,i+2)=\bigcup_{v_x\in V^{col}(v_k,i+1) }\{v_y\in V: v_y\in V_x-V^{col}(v_k,i+1)-V^{col}(v_k,i) \} $, $ E^{col}(v_k,i+2)=\{(v_x,v_y)\in E: v_x,v_y\in V^{col}(v_k,i+2) \} $.
\end{enumerate}
See Figure \ref{fig:g016} for example. It is a collapse trigged by vertex $ v_1 $ for the vertex regular graph $ G_b $, which is Figure \ref{fig:g014}. There are $ n $ collapses in a graph of $ n $ vertexes.

\begin{figure}[h]
	\centering
	\begin{subfigure}[b]{0.4\textwidth}
		\centering
		\includegraphics[width=0.8\textwidth]{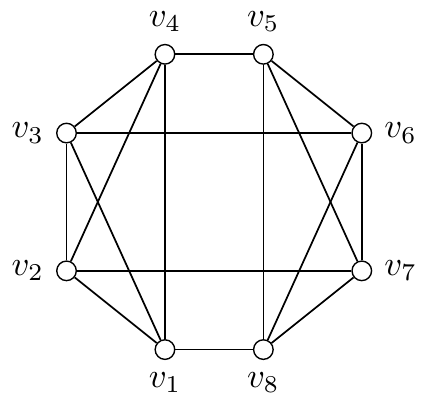}
		\caption{}
		\label{fig:g014}
	\end{subfigure}
	\begin{subfigure}[b]{0.5\textwidth}
		\centering
		\includegraphics[width=0.7\textwidth]{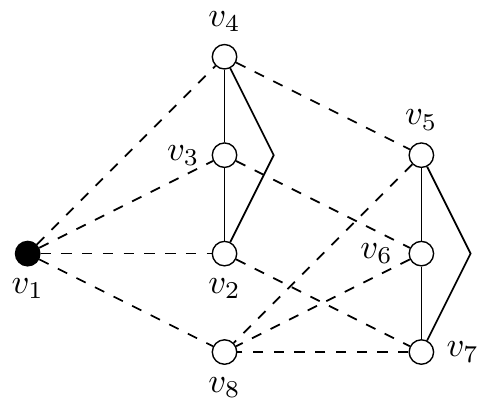}
		\caption{}
		\label{fig:g016}
	\end{subfigure}
	
	\caption{(a) vertex regular graph $ G_b $; (b) $ G_b^{col}(v_1) $}
	\label{fig:3}
\end{figure}

Let $ deg(v_i,v_j)=|V_i|+|V_j|-|V_i\cap V_j| $ be the degree (\emph{edge degree}) of an edge $ (v_i,v_j) $. Intuitively, the degree of an edge is the total number of collapses in which the edge is located before Layer $ 2 $. Likewise, we say a graph is \emph{edge regular} of $ w $ if every edge in the graph is of the same degree $ w $. Figure \ref{fig:g017} and Figure \ref{fig:g018} are both edge regular and vertex regular, and they are actually isomorphic. Figure \ref{fig:g014} is vertex regular but not edge regular.

\begin{figure}[h]
	\centering
	\begin{subfigure}[b]{0.49\textwidth}
		\centering
		\includegraphics[width=0.9\textwidth]{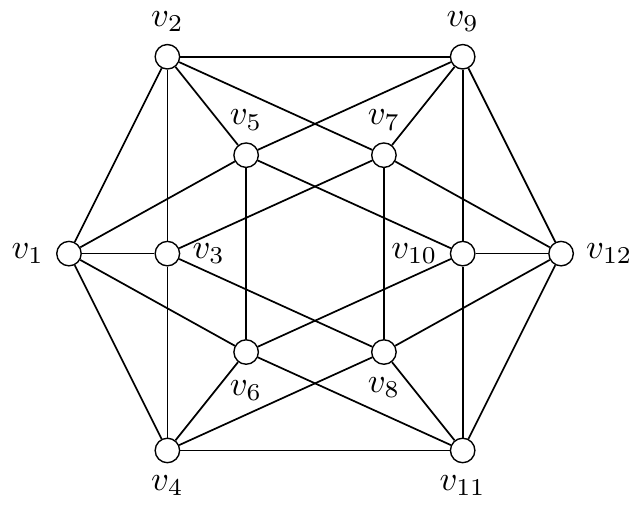}
		\caption{}
		\label{fig:g017}
	\end{subfigure}
	\begin{subfigure}[b]{0.49\textwidth}
		\centering
		\includegraphics[width=0.9\textwidth]{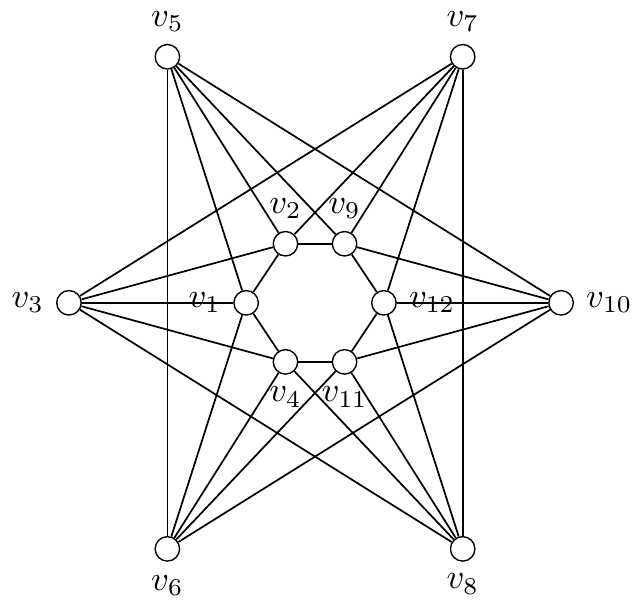}
		\caption{}
		\label{fig:g018}
	\end{subfigure}
	
	\caption{(a) graph $ G_c $; (b) graph $ G_d $}
	\label{fig:4}
\end{figure}

Let's define here an important class of graphs, the vertex symmetric graph. For a graph $ G=(V,E) $, by designating a \emph{nailed vertex} (imagine that you nail the graph to the wall with the nail on the designated vertex), we distinguish $ |V| $ possible \emph{nailed graphs} of $ G $, $ G(v_i) $, $ i=1,2,\cdots,|V| $. Those independent parts of the graph will fall down to the floor. Thus, $ G(v_i) $ is equal to $ G^{col}(v_i) $ with the edges between its layers added back. Any nailed graph is connected. $ G $ is \emph{vertex symmetric} if and only if all of its nailed graphs are isomorphic to each other. More precisely, if for every $ i $ from $ 1 $ to $ |V| $, we have for every $ j $ from $ i $ to $ |V| $, $ GI(G(v_i),G(v_j),\{\langle v_i,v_j \rangle \})=True $, then we say the graph $ G $ is vertex symmetric. Here, $ \{\langle v_i,v_j\rangle \} $ means $ v_i $ can only correspond to $ v_j $ in any isomorphic mapping and vice versa. Complete graph is vertex symmetric. A graph consisting of one circle or multiple identical circles is also vertex symmetric. Figure \ref{fig:g017} and Figure \ref{fig:g018} are further examples of vertex symmetric graphs. Later we will see that the only obstacle of the graph isomorphism problem is how to recognize a vertex symmetric graph.

\begin{lemma}\label{sym-vertex}
	Given $ G=(V,E) $, if $ G $ is vertex symmetric, then it is vertex regular.
\end{lemma}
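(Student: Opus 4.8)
The plan is to reduce the statement to a single observation: the degree of the nailed vertex \emph{inside} its own nailed graph equals its degree in the original graph. Concretely, I would first argue that for any $v_i\in V$ we have $|V_i|=\deg_{G(v_i)}(v_i)$, where $\deg_{G(v_i)}$ denotes degree computed within the nailed graph $G(v_i)$. This follows directly from the construction of the collapse and the nailed graph: Layer $1$ of $G^{col}(v_i)$ has vertex set exactly $V^{col}(v_i,1)=V_i$, and since $G(v_i)$ is obtained from $G^{col}(v_i)$ by adding back the edges between consecutive layers, every edge $(v_i,v_k)$ with $v_k\in V_i$ (these are precisely the edges between Layer $0$ and Layer $1$) reappears in $G(v_i)$; conversely $G(v_i)$ is a subgraph of $G$, so $v_i$ has no neighbours in $G(v_i)$ other than those in $V_i$. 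Hence $\deg_{G(v_i)}(v_i)=|V_i|=\deg_G(v_i)$.

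Next I would invoke the hypothesis. Since $G$ is vertex symmetric, for every pair of indices $i,j$ there is an isomorphism $\varphi\colon G(v_i)\to G(v_j)$ with $\varphi(v_i)=v_j$ — this is exactly what $GI(G(v_i),G(v_j),\{\langle v_i,v_j\rangle\})=True$ asserts. An isomorphism preserves vertex degrees, so $\deg_{G(v_i)}(v_i)=\deg_{G(v_j)}(\varphi(v_i))=\deg_{G(v_j)}(v_j)$.

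Combining the two steps gives $|V_i|=|V_j|$ for all $i,j\in\{1,2,\dots,n\}$, i.e.\ every vertex of $G$ has the same degree, which is precisely the definition of \emph{vertex regular}, completing the proof.

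As for where the work sits: there is no deep obstacle here. The one point that needs care is the first step, namely pinning down, from the deliberately informal ``nail it to the wall'' description of a nailed graph, that $v_i$ retains all and only its original neighbours in $G(v_i)$; once that is settled, everything else is immediate from the fact that isomorphisms preserve degree. A minor case worth noting is a vertex of degree $0$: then $G(v_i)$ is a single isolated vertex and the argument still goes through, forcing $G$ to be a disjoint union of isolated vertices, which is vertex regular of degree $0$.
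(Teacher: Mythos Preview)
Your argument is correct. The paper states this lemma without proof, so there is nothing to compare against directly; but your route is exactly the natural one given the paper's nonstandard definition of vertex symmetry via nailed graphs rather than via automorphisms of $G$ itself. The only step that actually needs the paper's definitions is your first one, and you handle it correctly: Layer~$1$ of $G^{col}(v_i)$ is by definition $V_i$, and the nailed graph restores the edges from Layer~$0$ to Layer~$1$, so $v_i$ has exactly $|V_i|$ neighbours in $G(v_i)$. After that the constrained isomorphism $G(v_i)\to G(v_j)$ sending $v_i\mapsto v_j$ forces $|V_i|=|V_j|$, and you are done.
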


We can generalize the concepts, collapse and nailed graph, to multi-collapse and multi-nailed graph, i.e., using multiple triggers or nailed vertexes at the same time. Given a graph $ G $, a \emph{multi-collapse} (sometimes just called collapse if it is clear from the content) with a set of triggers $ T $ is $ G^{col}(T) $, in which layer $ 0 $ contains vertexes in the set $ T $, and layer $ k+1 $ contains all vertexes connecting those vertexes in layer $ k $. A \emph{multi-nailed graph} with a set of nailed vertexes $ T $ is $ G(T) $, which looks the same as $ G^{col}(T) $ with those edges between layers added back. The normal (not nailed) graph and the nailed graph are special cases of the multi-nailed graph.

A graph $ G $ is \emph{edge symmetric} if and only if all of its multi-nailed graphs $ G((v_i,v_j)) $, with $ (v_i, v_j)\in E $, are isomorphic to each other under the constraint $ \{\langle (v_i,v_j), (v_{i'}, v_{j'})\rangle \} $, which means that the edge $ (v_i,v_j) $ can only be mapped to $ (v_{i'}, v_{j'}) $ in any isomorphism between $ G((v_i,v_j)) $ and $ G((v_{i'}, v_{j'})) $.

\begin{lemma}\label{sym-edge}
	Given $ G=(V,E) $, if $ G $ is edge symmetric, then it is edge regular.
\end{lemma}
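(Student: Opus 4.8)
The plan is to mirror the proof of \Cref{sym-vertex}, with edges in place of vertices and the two-element multi-nailed graphs $G((v_i,v_j))$, $(v_i,v_j)\in E$, in place of the nailed graphs $G(v_i)$. The crux is that $deg(v_i,v_j)$ is determined by the isomorphism type of the first two layers of $G((v_i,v_j))$, and that the constraint $\langle (v_i,v_j),(v_{i'},v_{j'})\rangle$ forces any admissible isomorphism to respect this layering; edge symmetry then equates all edge degrees.

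First I would pin down layers $0$ and $1$ of $G((v_i,v_j))$. Layer $0$ is $\{v_i,v_j\}$, and layer $1$ consists of all vertices adjacent to $v_i$ or to $v_j$ that are not themselves $v_i$ or $v_j$, i.e. $(V_i\cup V_j)\setminus\{v_i,v_j\}$; moreover $G((v_i,v_j))$ retains every edge between these two layers. Since $(v_i,v_j)\in E$ we have $v_j\in V_i$ and $v_i\in V_j$, so $\{v_i,v_j\}\subseteq V_i\cup V_j$, and the vertex set of layers $0$ and $1$ together is exactly $V_i\cup V_j$, of cardinality $|V_i|+|V_j|-|V_i\cap V_j|=deg(v_i,v_j)$. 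Thus $deg(v_i,v_j)$ equals the number of vertices lying in the first two layers of $G((v_i,v_j))$ — a quantity intrinsic to that graph once layer $0$ is marked.

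Next I would use edge symmetry. Fix edges $(v_i,v_j)$ and $(v_{i'},v_{j'})$; by hypothesis there is an isomorphism $\pi\colon G((v_i,v_j))\to G((v_{i'},v_{j'}))$ respecting the constraint $\langle (v_i,v_j),(v_{i'},v_{j'})\rangle$, so $\pi$ sends the set $\{v_i,v_j\}$ onto $\{v_{i'},v_{j'}\}$, i.e. layer $0$ onto layer $0$. Because $\pi$ preserves adjacency and fixes layer $0$ setwise, it sends the vertices adjacent to layer $0$ and lying outside it (that is, layer $1$) onto the corresponding set on the other side; hence it maps layers $0$ and $1$ of $G((v_i,v_j))$ bijectively onto layers $0$ and $1$ of $G((v_{i'},v_{j'}))$. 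By the previous step the two vertex counts agree, so $deg(v_i,v_j)=deg(v_{i'},v_{j'})$. As this holds for every pair of edges, all edges of $G$ share one common degree, which is the assertion.

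I do not anticipate a real obstacle; the one place to be careful is the bookkeeping that layers $0\cup 1$ of $G((v_i,v_j))$ recover precisely the closed neighbourhood $V_i\cup V_j$ — in particular that no neighbour of $v_i$ or $v_j$ is diverted to a later layer (there is no earlier layer for it to belong to) and that $v_i,v_j$ themselves are already counted because $(v_i,v_j)\in E$. If one prefers to bypass even this, one may instead recover $deg(v_i,v_j)-2$ as the number of layer-$1$ vertices and argue invariance of that quantity; the two formulations are equivalent. One should also note, as in \Cref{sym-vertex}, that all the multi-nailed graphs involved are connected, so the constraint $\langle (v_i,v_j),(v_{i'},v_{j'})\rangle$ is the only extra ingredient beyond plain graph isomorphism.
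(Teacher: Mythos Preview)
Your argument is correct: layers $0$ and $1$ of $G((v_i,v_j))$ together comprise exactly $V_i\cup V_j$, whose cardinality is $deg(v_i,v_j)$, and any isomorphism obeying $\langle (v_i,v_j),(v_{i'},v_{j'})\rangle$ preserves this layered count, forcing all edge degrees to coincide. The paper itself states \Cref{sym-edge} (like \Cref{sym-vertex} and \Cref{sym-arc}) without proof, evidently treating it as an elementary observation; your write-up supplies precisely the natural verification one would expect, so there is nothing to contrast.
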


Actually, there exist graphs that is both vertex regular and edge regular but not vertex symmetric, see Figure \ref{fig:g019}. There exist graphs that is edge symmetric but not vertex symmetric, e.g., Figure \ref{fig:g019}, the complete bipartite graph in Figure \ref{fig:g021} and Figure 3.2 in \cite{godsil2001algebraic}. There exist graphs that is vertex symmetric but not edge symmetric, see Figure \ref{fig:g020}. There exist graphs edge symmetric but not vertex regular, see Figure \ref{fig:g021}. There exists graphs that is both vertex regular and edge regular but not vertex symmetric or edge symmetric, e.g., a graph consisting of two independent circles of different sizes.

\begin{figure}[h]
	\centering
	\begin{subfigure}[b]{0.49\textwidth}
		\centering
		\includegraphics[width=0.95\textwidth]{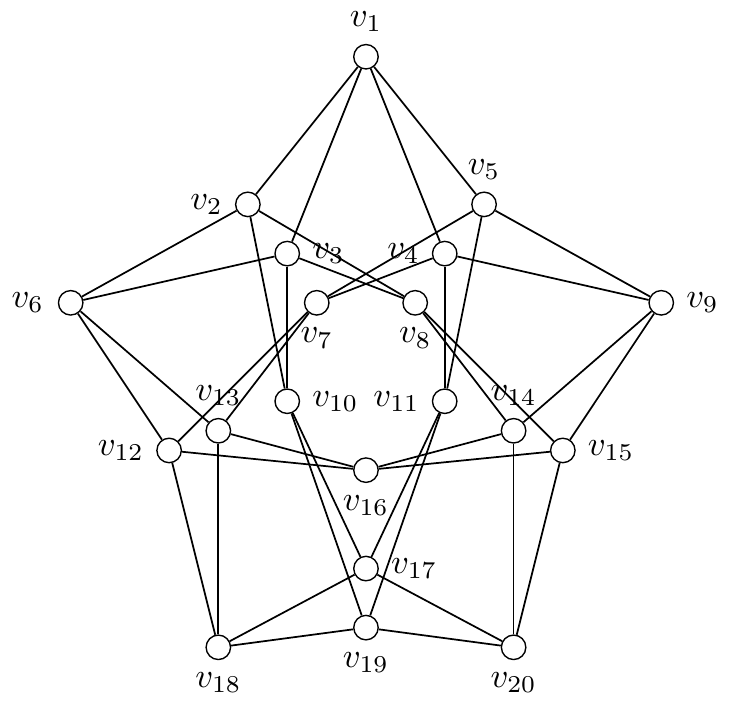}
		\caption{}
		\label{fig:g019}
	\end{subfigure}
	\begin{subfigure}[b]{0.3\textwidth}
		\centering
		\includegraphics[width=0.6\textwidth]{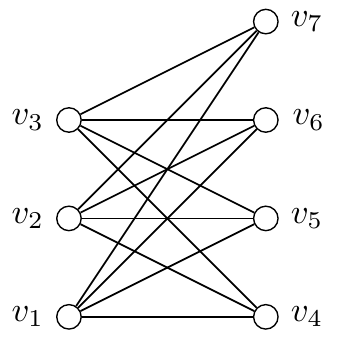}
		\caption{}
		\label{fig:g021}
		\centering
		\includegraphics[width=0.9\textwidth]{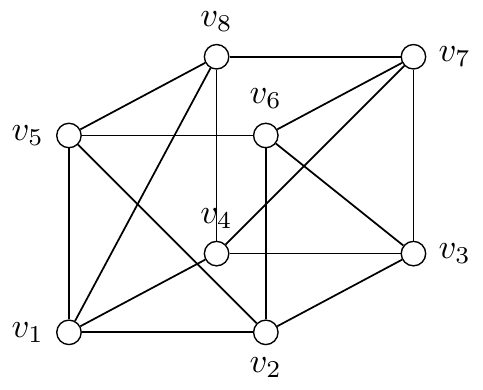}
		\caption{}
		\label{fig:g020}
	\end{subfigure}
	
	\caption{(a) graph $ G_e $; (b) graph $ G_f $; (c) graph $ G_g $}
	\label{fig:5}
\end{figure}

An arc is an edge treated as directed. A graph $ G $ is \emph{arc symmetric} if and only if all of its arc nailed graphs $ G(v_i; v_j) $, with $ (v_i, v_j)\in E $, are isomorphic to each other under the constraint $ \{\langle v_i,v_{i'} \rangle, \langle v_j, v_{j'}\rangle \} $, which means that the vertex $ v_i $ of the edge $ (v_i,v_j) $ can only be mapped to $ v_{i'} $ of the edge $ (v_{i'}, v_{j'}) $ and $ v_j $ can only be mapped to $ v_{j'} $ in any isomorphism between $ G(v_i;v_j) $ and $ G(v_{i'}; v_{j'}) $. The difference between arc symmetry and edge symmetry is that we treat the trigger edge as directed in the arc symmetry. There exists graphs that is both vertex symmetric and edge symmetric but not arc symmetric, e.g., the Doyle-Holt graph, see Figure 3.3 in \cite{godsil2001algebraic}.

\begin{lemma}\label{sym-arc}
	Given $ G=(V,E) $, if $ G $ is arc symmetric, then it is both vertex symmetric and edge symmetric.
\end{lemma}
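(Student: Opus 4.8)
The statement is, after stripping away the vocabulary, the classical fact that an arc‑transitive graph is both vertex‑transitive and edge‑transitive; so the plan is mostly bookkeeping: I would show that an isomorphism of arc‑nailed graphs which respects the arc constraint is, after forgetting the extra data, an isomorphism of the corresponding edge‑nailed graphs respecting the edge constraint, and likewise an isomorphism of the corresponding vertex‑nailed graphs respecting the vertex constraint.

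For edge symmetry: since the trigger edge is regarded as directed only in the arc‑nailed graph, $G(v_i;v_j)$ and $G((v_i,v_j))$ have the same vertex set, the same edge set and the same layer structure. Hence a bijection $\phi$ with $\phi(v_i)=v_{i'}$ and $\phi(v_j)=v_{j'}$ that is an isomorphism $G(v_i;v_j)\to G(v_{i'};v_{j'})$ is automatically an isomorphism $G((v_i,v_j))\to G((v_{i'},v_{j'}))$, and it carries the (undirected) edge $(v_i,v_j)$ to $(v_{i'},v_{j'})$, so it obeys the constraint $\{\langle(v_i,v_j),(v_{i'},v_{j'})\rangle\}$. Thus arc symmetry of $G$ supplies, for every pair of edges, exactly the isomorphism demanded in the definition of edge symmetry, and $G$ is edge symmetric.

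For vertex symmetry I would argue componentwise. A nailed graph $G(v_x)$ is the connected component of $G$ containing $v_x$, with $v_x$ marked, and the underlying graph of $G(v_x;v_y)$ with $(v_x,v_y)\in E$ is that same component. The key move is: given an edge $(v_x,v_x')$ and any arc $(v_x',v_y)$, arc symmetry produces an isomorphism $G(v_x;v_x')\to G(v_x';v_y)$ with $v_x\mapsto v_x'$, which, read on underlying graphs, is an automorphism of the component sending $v_x$ to $v_x'$. Now take any $v_p,v_q\in V$. If they lie in the same component, pick a path $v_p=x_0,x_1,\dots,x_k=v_q$ and compose the automorphisms this move gives along consecutive edges $(x_i,x_{i+1})$; the composite is an automorphism of $G(v_p)=G(v_q)$ carrying $v_p$ to $v_q$, i.e. $GI(G(v_p),G(v_q),\{\langle v_p,v_q\rangle\})=True$. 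If $v_p$ and $v_q$ lie in different components but each is incident to an edge, apply the move once across an edge at $v_p$ and an edge at $v_q$ to get the required isomorphism of the two components directly.

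The one point that genuinely needs care — and the step I expect to absorb the most attention — is the degenerate case in which $v_p$ (or $v_q$) is isolated: then $G(v_p)$ is a single vertex, no arc is available at $v_p$, and arc symmetry says nothing about it, so the bald statement can fail (a triangle together with an isolated vertex is arc symmetric but not vertex symmetric). I would handle this by imposing the standing convention that $G$ has no isolated vertices — equivalently, that the (component of the) graph under consideration is connected, which is the only setting in which these notions get used later and is in line with the fact that any nailed graph is connected. Under that convention everything goes through; the leftover verifications (that the path‑composition is a well‑defined finite product of bijections, and that each factor really is an automorphism) are immediate from the arc‑symmetry hypothesis applied to a single edge.
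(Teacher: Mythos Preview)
The paper states this lemma without proof, so there is nothing to compare against directly; your argument is therefore being judged on its own merits.

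Your proof is correct and in fact more careful than the paper. The edge-symmetry half is immediate once one observes, as you do, that $G(v_i;v_j)$ and $G((v_i,v_j))$ are the same object with a stronger versus weaker constraint on the allowed isomorphisms; the paper itself says as much (``the difference \ldots\ is that we treat the trigger edge as directed in the arc symmetry''). For vertex symmetry your path-composition argument is the standard one; note that it can be streamlined slightly: for a single edge $(v_x,v_{x'})$ you can always take the second arc to be $(v_{x'};v_x)$, so no auxiliary vertex $v_y$ is needed.

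Your observation about isolated vertices is a genuine catch. Under the paper's definitions, arc symmetry quantifies only over $(v_i,v_j)\in E$, while vertex symmetry quantifies over all $v_i\in V$; hence a triangle together with an isolated vertex satisfies the paper's definition of arc symmetric but fails vertex symmetric. The lemma as stated is therefore false without a no-isolated-vertex (or connectedness) hypothesis, and your proposed standing convention is exactly the right fix. The paper uses these notions only for connected pieces (nailed graphs), so the convention is harmless for the rest of the paper, but the statement of the lemma should carry it explicitly.
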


In the literature, vertex symmetric is equivalent to vertex transitive and $ 0 $-arc transitive, edge symmetric is equivalent to edge transitive, and arc symmetric is equivalent to arc transitive and $ 1 $-arc transitive. We prefer the word `symmetric' more than `transitive', as the former is shorter and more intuitive.

Let's point out the levels of symmetry of a graph: 
No pair of isomorphic vertex nailed subgraphs and no pair of isomorphic edge nailed subgraphs $ \Longrightarrow $ Some pairs of isomorphic vertex nailed subgraphs or some pairs of isomorphic edge nailed subgraphs $ \Longrightarrow $ Vertex symmetric or edge symmetric $ \Longrightarrow $ Arc symmetric.

Given two multi-sets $ A=\{a_1,a_2,\cdots,a_s \} $ and $ B=\{b_1,b_2,\cdots,b_t \} $, we say $ A $ \emph{matches} $ B $ if the sort of $ a_1,a_2,\cdots,a_s $ is equal to the sort of $ b_1,b_2,\cdots,b_t $. For instance, $ \{3,7,3,2,1 \} $ matches $ \{1,2,3,3,7 \} $ but does not match $ \{3,7,2,1,4 \} $. More generally, $ \{ \{5,3,3\},\{5,2,2,8\},\{1,4,2\},\{1,3,3\} \} $ matches $ \{ \{1,2,4\},\{1,3,\newline 3\},\{2,2,5,8\},\{3,3,5\} \} $ but does not match $ \{ \{5,3,3\},\{1,3,4\},\{5,2,2,2\},\{1,4,\newline 2\} \} $.

The \emph{vertex property} of a graph $ G=(V,E) $ is the multi-set $ P_V(G)=\{|V_i|:i=1,2,\cdots,n \} $, where $ n=|V| $. We say two graphs $ G $ and $ G' $ are \emph{of the same vertex property} if their vertex properties match. Similarly, the \emph{edge property} of a graph $ G=(V,E) $ is the multi-set $ P_E(G)=\{deg(v_i,v_j):(v_i,v_j)\in E \} $. We say two graphs $ G $ and $ G' $ are \emph{of the same edge property} if their edge properties match.

Given one collapse of $ G $, say $ G^{col}(v_k) $, which has $ l+1 $ layers, $ G^{col}(v_k,i) $ for $ i=0,1,\cdots, l $, the \emph{collapse tomography} of $ G^{col}(v_k) $ is an ordered list of $ l $ ordered pairs of the vertex property and the edge property, i.e., \[ C^{tom}(G,v_k)=[[P_V(G^{col}(v_k,i));P_E(G^{col}(v_k,i))]:i=1,2,\cdots,l ]. \] Two collapse tomographies \emph{match} if all properties in the ordered list match at the corresponding position. For instance, \[ [ [\{5,3,3\};\{5,2,2,8\} ], [\{1,4,2\}; \{1,3, 3\} ], [\{7,5\}; \{3,3\} ] ] \] matches \[ [ [ \{3,3,5\};\{2,2,5,8\} ],[\{1,2,4\}; \{1,3,3\}], [\{5,7\};\{3,3\} ] ] \] but not \[ [ [ \{3,3,5\};\{2,2,5,8\} ],[\{1,3,3\};\{1,2,4\} ], [\{5,7\};\{3,3\} ] ]. \]

The collapse tomography of Figure \ref{fig:g016} is \[ [ [\{0,2,2,2\};\{3,3,3\} ],[\{2,2,2\};\{3,3,3\}] ]. \]

Given a graph $ G $, the \emph{collapse pattern} of $ G $ is a multi-set of collapse tomographies of $ G $, i.e., \[ C^{pat}(G)=\{C^{tom}(G,v_i):i=1,2,\cdots,|V| \}. \]

Given a nailed graph $ G(v_a) $, let $ G^{col}(v_a,j) $, $ j=0,1,2,\cdots,l $, be $ l+1 $ layers of its collapse. The \emph{collapse pattern} of $ G(v_a) $ is an ordered list of multi-sets of collapse tomographies of $ G $, i.e., \[ C^{pat}(G(v_a))=[\{C^{tom}(G,v_i): v_i\in V^{col}(v_a,j) \}: j=0,1,2,\cdots l ]. \]

Given a multi-nailed graph $ G(T) $ and its collapse $ G^{col}(T) $, which has $ l+1 $ layers, the corresponding collapse pattern is \[ C^{pat}(G(T))=[\{C^{tom}(G,v_i): v_i\in V^{col}(T,j) \}: j=0,1,2,\cdots l ]. \]

If each vertex $ v $ of the graph $ G $ is labeled by $ L(v) $, then the \emph{collapse tomography under labeling} $ L $ is \[ C^{tom}(G,L,v_i)=[[L(v_i);P_V(G^{col}(v_k,i));P_E(G^{col}(v_k,i))]:i=1,2,\cdots,l ], \]
and the \emph{collapse pattern under labeling} $ L $ for a nailed graph $ G(v_a) $ is \[ C^{pat}(G(v_a),L)=[\{C^{tom}(G,L,v_i): v_i\in V^{col}(v_a,j) \}: j=0,1,2,\cdots l ]. \]

We leave the study of properties of the collapse pattern of normal graph, nailed graph and multi-nailed graphs as a future work. 

\begin{lemma}\label{iso-pat-l}
	Given $ G=(V,E) $ and $ G'=(V',E') $, if $ G $ is isomorphic to $ G' $, then the collapse pattern of $ G $ matches that of $ G' $.
\end{lemma}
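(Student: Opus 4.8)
The plan is to extract from a graph isomorphism $\pi:V\to V'$ an explicit matching between the two multi-sets of collapse tomographies, by showing that $\pi$ carries the collapse triggered at $v_k$ onto the collapse triggered at $\pi(v_k)$, one layer at a time, and then reading off the vertex and edge properties.

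First I would prove the key structural claim: if $\pi$ is an isomorphism from $G$ to $G'$, then for every $v_k\in V$ and every layer index $i$ we have $\pi\big(V^{col}(v_k,i)\big)=V'^{col}(\pi(v_k),i)$, and moreover the restriction of $\pi$ to $V^{col}(v_k,i)$ is a graph isomorphism from the subgraph $G^{col}(v_k,i)$ onto $G'^{col}(\pi(v_k),i)$. This goes by induction on $i$ following the recursive definition of the collapse. Layer $0$ is the singleton $\{v_k\}$, sent to $\{\pi(v_k)\}$; layer $1$ is the neighborhood $V_k$, which $\pi$ maps bijectively onto $V'_{\pi(v_k)}$ because $\pi$ preserves adjacency; for the step, layer $i+2$ is built from layers $i$ and $i+1$ by collecting the neighbors of layer-$(i+1)$ vertices that lie in neither of the two previous layers, and since $\pi$ preserves adjacency and (by the induction hypothesis) already maps layers $i$ and $i+1$ correctly, it maps layer $i+2$ correctly as well. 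The same argument shows the two collapses have the same number of layers $l$, and that the edge sets $E^{col}$, being exactly the edges of $E$ (resp. $E'$) induced on the vertex set of each layer, are matched up by $\pi$.

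Next I would deduce that the two collapse tomographies match. Since $\pi$ restricted to layer $i$ is an isomorphism between $G^{col}(v_k,i)$ and $G'^{col}(\pi(v_k),i)$, these subgraphs have the same vertex property and the same edge property: the degree multi-set is clearly an isomorphism invariant, and the edge degree $deg(v_i,v_j)=|V_i|+|V_j|-|V_i\cap V_j|$ is too, because each of $|V_i|$, $|V_j|$, $|V_i\cap V_j|$ is preserved. Doing this for all $l$ layers shows $C^{tom}(G,v_k)$ matches $C^{tom}(G',\pi(v_k))$ entry by entry, which is precisely the definition of two tomographies matching.

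Finally, because $\pi$ is a bijection $V\to V'$, the assignment $v_k\mapsto\pi(v_k)$ is a bijection pairing each tomography $C^{tom}(G,v_k)$ with a matching tomography $C^{tom}(G',\pi(v_k))$; hence the multi-sets $\{C^{tom}(G,v_i)\}$ and $\{C^{tom}(G',v'_j)\}$ match, i.e. $C^{pat}(G)$ matches $C^{pat}(G')$. I expect the only genuine work to lie in the layer-by-layer induction of the first step, where one must carry both of the two preceding layers simultaneously because of the shape of the recursion; the passage from isomorphic layers to matching vertex and edge properties, and then to matching patterns, is routine multi-set bookkeeping.
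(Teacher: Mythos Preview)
Your proposal is correct. The induction on the layer index is the natural argument: isomorphisms preserve distance, so they carry the $i$-th layer of the collapse at $v_k$ onto the $i$-th layer of the collapse at $\pi(v_k)$, and the induced subgraphs on each layer are then isomorphic, forcing the vertex and edge properties to match layer by layer.

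The paper itself does not supply a proof of this lemma; it is stated and immediately followed by the converse conjecture, with the forward direction evidently regarded as routine (every quantity entering the collapse pattern is a graph isomorphism invariant). Your write-up is exactly the kind of verification the paper omits, and nothing more is needed. One small remark: when you invoke the edge degree $|V_i|+|V_j|-|V_i\cap V_j|$, make sure you are reading the neighborhoods inside the layer subgraph $G^{col}(v_k,i)$, since $P_E$ is applied to that subgraph; your argument already covers this because the restriction of $\pi$ is an isomorphism of the induced subgraphs, but it is worth saying explicitly.
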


\begin{conjecture}\label{iso-pat}
	Given $ G=(V,E) $ and $ G'=(V',E') $, $ G $ is isomorphic to $ G' $ if the collapse pattern of $ G $ matches that of $ G' $.
\end{conjecture}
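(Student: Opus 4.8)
One direction --- that an isomorphism forces matching collapse patterns --- is \Cref{iso-pat-l}, so the content of the conjecture is the converse. I would attack it by induction on $ n=|V|=|V'| $, the small cases being exactly the ones the authors have checked by hand and plan to verify by computer. The engine of the inductive step is an individualization-and-refinement procedure driven by the collapse tomographies. Assume $ C^{pat}(G) $ matches $ C^{pat}(G') $. Pick any $ v\in V $; since the two patterns agree as multi-sets of tomographies, there is a $ v'\in V' $ with $ C^{tom}(G',v')=C^{tom}(G,v) $. Individualize $ v $ and $ v' $ with a fresh label and pass to the collapse patterns under this labeling of the nailed graphs $ G(v) $ and $ G'(v') $. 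The first claim to prove is that these labeled collapse patterns still match. One then iterates: inside each layer $ V^{col}(v,j) $ one partitions the vertices according to their labeled collapse tomographies, assigns matching fresh labels to the matching classes on the $ G' $ side, and refines again. Because every split on the $ G $ side is mirrored on the $ G' $ side, the label classes of $ G $ and $ G' $ stay in size-preserving bijection throughout, and the procedure stabilizes after finitely many rounds.

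If stabilization leaves every class a singleton, the induced bijection $ \pi:V\rightarrow V' $ is the candidate isomorphism, and one verifies $ (v_i,v_j)\in E\iff(\pi(v_i),\pi(v_j))\in E' $ by descending through the layers: the matching vertex properties $ P_V $ fix the degrees, the matching edge properties $ P_E $ fix the edge degrees $ deg(v_i,v_j)=|V_i|+|V_j|-|V_i\cap V_j| $, and the layer structure of the collapse, together with the finer multi-nailed collapse patterns $ C^{pat}(G((v_i,v_j))) $, determines which pairs within a layer and across consecutive layers are adjacent. If instead some class stabilizes with more than one vertex, the subgraph it induces is highly regular (vertex regular by \Cref{sym-vertex}, and morally a vertex symmetric piece in the sense defined above); here one would invoke the separate conjecture \Cref{sym-pat} on recognizing vertex symmetry, reinforced by the edge- and arc-symmetry refinements (cf.\ \Cref{sym-edge}, \Cref{sym-arc}), to argue that the residual ambiguity is realized by an automorphism and therefore does not obstruct the construction of $ \pi $.

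The hard part --- and the reason this is posed as a conjecture rather than a lemma --- is exactly the completeness claim buried in the two steps above: that the collapse tomography, which retains only the multi-sets of vertex degrees and of edge degrees at each breadth-first layer, becomes a \emph{complete} invariant once enriched by its labeled and multi-nailed versions. This is a strong statement, in the spirit of --- and strictly stronger than --- separating all graphs by a polynomial-time combinatorial refinement; the classical obstructions, such as non-isomorphic strongly regular graphs and Cai--F\"urer--Immerman-type gadgets, are precisely the configurations on which a coarse refinement collapses, and a proof would have to show that the edge-degree data and the multi-nailed collapse patterns always break those ties. I do not see how to exclude such a collapse by hand, so the sensible next step is a systematic computer search for counterexamples among small graphs, exactly as the authors suggest.
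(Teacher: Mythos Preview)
The paper does not prove this statement: it is explicitly a \emph{conjecture}, and the authors say in the surrounding text that ``although we have tried our best to prove them, we cannot,'' proposing instead a computer search for counterexamples. So there is no paper proof to compare against.

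Your proposal is honest about this: you outline an individualization-and-refinement strategy and then concede that the crux --- that the (labeled, multi-nailed) collapse tomographies form a \emph{complete} isomorphism invariant --- remains unproved. That concession is correct, and the gap is not a technicality but the entire content of the conjecture. Concretely, the sentence ``the matching edge properties $P_E$ fix the edge degrees \ldots\ and the layer structure \ldots\ determines which pairs within a layer and across consecutive layers are adjacent'' does not follow: matching the multisets $P_V$ and $P_E$ layer by layer constrains but does not pin down the adjacency relation, and nothing in the argument excludes the CFI-type obstructions you yourself name. The fallback appeal to \Cref{sym-pat} in the non-singleton case is also circular within the paper's own framework, since the paper notes that \Cref{sym-pat} is a corollary of \Cref{nail-iso-pat}, which is the nailed-graph analogue of the very statement under discussion. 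In short, your sketch is a fair description of what a proof would have to accomplish, but it does not close the gap --- and neither does the paper, which is why the statement is posed as a conjecture.
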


\begin{lemma}\label{nail-iso-pat-l}
	Given $ G $, $ G' $, and their nailed graphs, $ G(v_a) $ and $ {G'}(v_b) $, if $ G(v_a) $ is isomorphic to $ {G'}(v_b) $ with the constraint $ \{\langle v_a,v_b \rangle \} $, then the collapse pattern of $ G(v_a) $ matches that of $ {G'}(v_b) $. 
\end{lemma}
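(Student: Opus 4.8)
The plan is to transport everything along the given isomorphism $\pi\colon G(v_a)\to {G'}(v_b)$ with $\pi(v_a)=v_b$, exactly as in the proof of \Cref{iso-pat-l}, but now keeping track of the layer decomposition that the nailed structure imposes. Recall that a nailed graph $G(v_a)$ is, as an abstract graph, the connected component of $G$ containing $v_a$, organised into the layers $V^{col}(v_a,j)$ of the collapse $G^{col}(v_a)$; since adjacency can change graph distance by at most one, an easy induction on $j$ shows that $V^{col}(v_a,j)$ is precisely the set of vertices at distance exactly $j$ from $v_a$ inside that component.

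First I would show that $\pi$ respects the layering. Any graph isomorphism preserves shortest-path distances, and $\pi(v_a)=v_b$, so $\pi$ restricts to a bijection $V^{col}(v_a,j)\to {V'}^{col}(v_b,j)$ for every $j$; in particular the two collapses have the same number of layers, say $l+1$, so the ordered lists $C^{pat}(G(v_a))$ and $C^{pat}({G'}(v_b))$ have equal length.

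Next, fix a vertex $w$ in the component of $v_a$. A collapse $G^{col}(w)$ is a breadth-first exploration that never leaves the connected component of $w$, so the collapse tomography $C^{tom}(G,w)$ depends only on that component together with the marked vertex $w$. Since $\pi$ maps this component isomorphically onto the component of $\pi(w)$ in $G'$ and sends $w$ to $\pi(w)$, it carries each layer subgraph $G^{col}(w,m)$ isomorphically onto ${G'}^{col}(\pi(w),m)$; an isomorphism of (sub)graphs preserves the degree multiset $P_V$, and because $deg(v_i,v_j)=|V_i|+|V_j|-|V_i\cap V_j|$ is determined by the adjacency relation it also preserves the edge-degree multiset $P_E$. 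Hence $C^{tom}(G,w)$ matches $C^{tom}(G',\pi(w))$ term by term --- this is the same computation already used to prove \Cref{iso-pat-l}.

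Finally, for each layer $j$ the map $\pi$ gives a bijection between $V^{col}(v_a,j)$ and ${V'}^{col}(v_b,j)$ under which the associated collapse tomographies match, so the multi-sets $\{C^{tom}(G,v_i):v_i\in V^{col}(v_a,j)\}$ and $\{C^{tom}(G',v'_i):v'_i\in {V'}^{col}(v_b,j)\}$ match, since they sort to the same list. As this holds for every $j=0,1,\dots,l$, the two ordered lists agree position by position, which is exactly the assertion that $C^{pat}(G(v_a))$ matches $C^{pat}({G'}(v_b))$. The only genuinely delicate points are the two localisation claims --- that the recursively defined layers coincide with distance layers, and that the whole-graph tomography $C^{tom}(G,w)$ is insensitive to everything outside the component of $w$ --- both of which follow from the bounded-change-of-distance observation; once those are in hand the remainder is bookkeeping identical to that in \Cref{iso-pat-l}.
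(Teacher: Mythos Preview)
Your argument is correct and is exactly the natural one: transport the layer decomposition and every collapse along the constrained isomorphism $\pi$, using that the layers $V^{col}(v_a,j)$ are precisely the distance-$j$ spheres about $v_a$ and that the tomography $C^{tom}(G,w)$ is a purely local, component-wise invariant. All three steps --- $\pi$ preserves layers, $\pi$ sends each whole-graph collapse $G^{col}(w)$ to ${G'}^{col}(\pi(w))$ layer by layer, and hence the per-layer multisets of tomographies match --- are sound.

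One remark on the comparison you were asked to make: the paper states this lemma (and likewise \Cref{iso-pat-l}, which you invoke as a template) without proof, so there is no argument in the text to compare against; your write-up supplies exactly the routine verification the author left to the reader. In that sense your appeal to ``the same computation already used to prove \Cref{iso-pat-l}'' is slightly misleading as a citation, though the computation itself --- that a graph isomorphism carries each layer subgraph $G^{col}(w,m)$ onto ${G'}^{col}(\pi(w),m)$ and hence preserves both $P_V$ and $P_E$ of that layer --- is of course correct and is what one would write for \Cref{iso-pat-l} as well.
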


\begin{conjecture}\label{nail-iso-pat}
	Given $ G $, $ G' $, and their nailed graphs, $ G(v_a) $ and $ {G'}(v_b) $, $ G(v_a) $ is isomorphic to $ {G'}(v_b) $ with the constraint $ \{\langle v_a,v_b \rangle \} $ if the collapse pattern of $ G(v_a) $ matches that of $ {G'}(v_b) $. 
\end{conjecture}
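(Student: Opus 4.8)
The plan is to establish the converse direction of \Cref{nail-iso-pat-l} by induction on the size $|V^{col}(v_a)|$ of the (connected) nailed graph, building the isomorphism $\pi$ with $\pi(v_a)=v_b$ stratum by stratum along the BFS-type layering $V^{col}(v_a,0)\subseteq V^{col}(v_a,1)\subseteq\cdots\subseteq V^{col}(v_a,l)$. The base case is the small-graph verification already mentioned in the abstract. For the inductive step, observe that since the two collapse patterns match, the nailed graphs have the same number of layers $l$, and for every $j$ the multi-set $\{C^{tom}(G,v_i):v_i\in V^{col}(v_a,j)\}$ equals $\{C^{tom}(G',v'_i):v'_i\in V^{col}(v_b,j)\}$; hence for each $j$ there is a tomography-preserving bijection $\sigma_j$ from layer $j$ of $G(v_a)$ onto layer $j$ of $G'(v_b)$. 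The real content is to replace the unrelated family $\{\sigma_j\}_{j=0}^{l}$ by one family that is simultaneously a graph isomorphism.

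First I would record the structural facts I intend to use. Because the layering is a breadth-first layering, every edge of a nailed graph joins two vertices in the same layer or in two consecutive layers, so $\pi$ is pinned down once it is prescribed layer by layer in a way that preserves (a) all intra-layer edges, (b) all edges between layers $j$ and $j+1$, and (c) collapse tomographies. The key point is that $C^{tom}(G,v_i)$ already encodes, through layers $1$ and $2$ of the collapse triggered by $v_i$, the local adjacency pattern around $v_i$ — in particular the edge degrees of the edges incident to $v_i$ and the vertex degrees of its neighbours — so two vertices sharing a collapse tomography are ``locally interchangeable'' in a fairly strong sense, and $\sigma_0$ is forced to be $v_a\mapsto v_b$.

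Then I would carry out the extension. Assume a partial isomorphism $\pi$ has been defined and verified on $\bigcup_{i\le j}V^{col}(v_a,i)$. For each already-mapped $v$ with image $v'=\pi(v)$, the neighbours of $v$ in layer $j+1$ must be matched bijectively with the neighbours of $v'$ in layer $j+1$, and these local requirements must be reconciled simultaneously because a layer-$(j+1)$ vertex may be adjacent to several layer-$j$ vertices. I would model this as a refinement of $\sigma_{j+1}$ compatible with the edges already fixed into layer $j$, and argue — using the matching of the collapse patterns restricted to the strictly smaller multi-nailed subgraphs hanging below layer $j$, together with the induction hypothesis applied to them — that a consistent refinement exists and iterating through the layers yields a full isomorphism. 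An alternative route worth trying is a direct reduction to \Cref{iso-pat}: attach to $v_a$ in $G$ and to $v_b$ in $G'$ an identical rigid gadget (for instance a clique of a size occurring nowhere else) so that any isomorphism of the augmented graphs is forced to carry $v_a$ to $v_b$, then show that the collapse pattern of the augmented $G$ matches that of the augmented $G'$ exactly when the collapse pattern of $G(v_a)$ matches that of $G'(v_b)$ under $\{\langle v_a,v_b\rangle\}$, and finally invoke \Cref{iso-pat}.

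The main obstacle is precisely item (c) in the reconciliation: showing that the purely local, vertex-indexed tomographic data is rich enough to force a globally consistent stitching of the layer bijections. This is the very phenomenon that makes graph isomorphism hard — locally indistinguishable vertices need not be related by any automorphism — and a priori two non-isomorphic nailed graphs could carry identical collapse patterns while every attempt to glue the $\sigma_j$ breaks down at some layer (this is the edge-regular, vertex-regular ``hard core'' where the layer bijections are least constrained). Closing that gap, either by strengthening the collapse tomography with a finer invariant or by producing a counterexample, is exactly what the paper defers to a computer search; I would therefore not expect a short proof by hand, and the realistic deliverable of this plan is a proof of \Cref{nail-iso-pat} conditional on \Cref{iso-pat} via the gadget reduction, together with a reduction of the unconditional case to that hard core.
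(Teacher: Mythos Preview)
There is nothing in the paper to compare against: \Cref{nail-iso-pat} is stated as a \emph{conjecture}, the authors say explicitly that they ``have tried our best to prove them'' but ``cannot'', and they defer resolution to a computer search for counterexamples. So your proposal has to be judged on its own, and the question is whether it closes a gap the paper leaves open.

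It does not, and you identify the failure point yourself. Upgrading the per-layer bijections $\sigma_j$ to a single isomorphism requires that the multisets $\{C^{tom}(G,v_i):v_i\in V^{col}(v_a,j)\}$ determine the bipartite adjacency between layer $j$ and layer $j{+}1$; but $C^{tom}$ records only aggregate vertex- and edge-degree statistics inside each layer of each collapse, not which particular layer-$j$ vertices a given layer-$(j{+}1)$ vertex attaches to. The phrase ``argue \ldots\ that a consistent refinement exists'' is exactly the step with no argument behind it. Your alternative gadget route to \Cref{iso-pat} is a legitimate reduction, but it only trades one open conjecture for another in the same family --- the paper records merely that \Cref{sym-pat} follows from \Cref{nail-iso-pat} and treats \Cref{iso-pat}, \Cref{nail-iso-pat}, \Cref{sym-pat} alike as unproved --- so even your stated ``realistic deliverable'' does not advance the situation.

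More to the point, $C^{pat}$ is a BFS-layered degree refinement, and invariants of this shape are bounded in distinguishing power by a fixed dimension of Weisfeiler--Leman; Cai--F\"urer--Immerman-type constructions produce non-isomorphic pairs on which all such layered degree data coincide. The conjecture is therefore almost certainly false, and the appropriate deliverable is a concrete counterexample rather than a proof sketch.
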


\begin{lemma}\label{sym-pat-l}
	Given $ G=(V,E) $, if $ G $ is vertex symmetric, then all $ |V| $ collapse tomographies of $ G $ match each other, i.e., the collapse pattern of a vertex symmetric graph consists of $ |V| $ equal collapse tomographies. 
\end{lemma}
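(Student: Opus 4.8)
The plan is to route everything through the isomorphisms that the definition of vertex symmetry hands us, and to observe that such an isomorphism is forced to respect the layered structure of a collapse, so that it transports the per-layer data of one collapse tomography onto that of another. First I would make explicit what a collapse looks like: unwinding the recursion defining $G^{col}(v_k,\cdot)$, one sees that $V^{col}(v_k,m)$ is exactly the set of vertices at graph distance $m$ from $v_k$ (inside the connected component of $v_k$), and that $G^{col}(v_k,m)$ is the subgraph of $G$ induced on that shell. The only point that needs an argument is that subtracting just the two previous layers in the recursion suffices to carve out the breadth-first shells: by induction on $m$, any edge of a connected graph joins two vertices whose distances from $v_k$ differ by at most $1$, so a vertex adjacent to layer $m+1$ cannot already lie in any layer $\le m-1$, and hence layer $m+2$ picks up precisely the new vertices at distance $m+2$. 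Consequently the nailed graph $G(v_k)$, which is $G^{col}(v_k)$ with the inter-layer edges restored, is nothing but the connected component of $v_k$, carrying in addition the partition of its vertices by distance from $v_k$.

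Next, fix $i,j\in\{1,\dots,|V|\}$. Vertex symmetry provides an isomorphism $\pi\colon G(v_i)\to G(v_j)$ with $\pi(v_i)=v_j$ (the isomorphism supplied by the definition when $i\le j$, its inverse when $i>j$). A graph isomorphism preserves distances, so $\pi$ maps the distance-$m$ shell around $v_i$ bijectively onto the distance-$m$ shell around $v_j$ for every $m$; in particular the two collapses have the same number of layers $l$, and for each $m=1,\dots,l$ the restriction of $\pi$ to layer $m$ is a graph isomorphism $G^{col}(v_i,m)\to G^{col}(v_j,m)$, since an edge with both endpoints in a given shell is sent to an edge with both endpoints in the corresponding shell. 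Now $P_V$ is just the multiset of vertex degrees, and each entry of $P_E$ has the form $|V_x|+|V_y|-|V_x\cap V_y|$, i.e.\ is determined by vertex degrees and numbers of common neighbours; both are invariants of a finite graph up to isomorphism. Hence $P_V(G^{col}(v_i,m))$ matches $P_V(G^{col}(v_j,m))$ and $P_E(G^{col}(v_i,m))$ matches $P_E(G^{col}(v_j,m))$ for every $m$, which by the definition of matching for ordered lists means that $C^{tom}(G,v_i)$ matches $C^{tom}(G,v_j)$.

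Since $i$ and $j$ were arbitrary, any two of the $|V|$ collapse tomographies of $G$ match, so the multiset $C^{pat}(G)=\{C^{tom}(G,v_i):i=1,\dots,|V|\}$ consists of $|V|$ mutually matching tomographies, which is exactly the claim. I expect the only genuinely delicate step to be the first one, namely checking that the recursive collapse definition really does reproduce the breadth-first shells, so that an isomorphism of nailed graphs is automatically layer-preserving; once that is in place, the remainder is the routine fact that $P_V$ and $P_E$ are isomorphism invariants. A disconnected $G$ causes no difficulty: by \Cref{sym-vertex} a vertex symmetric graph is vertex regular, and each collapse tomography only involves the component of its trigger, so the whole argument runs component by component.
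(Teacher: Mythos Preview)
The paper states \Cref{sym-pat-l} without proof, so there is no argument to compare against; your proposal therefore fills a genuine gap. Your argument is correct and is essentially the natural one: identify the collapse layers with the breadth-first shells around the trigger, transport them via the isomorphism supplied by vertex symmetry (which must preserve distance from the nailed vertex because it fixes that vertex), and then observe that $P_V$ and $P_E$ are isomorphism invariants of each layer graph. The BFS identification is the only step requiring care, and your justification---that an edge joins vertices whose distances from $v_k$ differ by at most one, so neighbours of layer $m{+}1$ lie only in layers $m$, $m{+}1$, or $m{+}2$---is exactly what is needed to see that subtracting only the two previous layers suffices.

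One cosmetic remark: when you invoke $P_E$, you need not unpack the formula $|V_x|+|V_y|-|V_x\cap V_y|$; since $P_E$ is computed entirely inside $G^{col}(v_k,m)$, the per-layer isomorphism already forces the two multisets to coincide, just as for $P_V$. Your final sentence about disconnected $G$ is also slightly more than necessary: the collapse $G^{col}(v_k)$ by construction never leaves the component of $v_k$, so the tomography $C^{tom}(G,v_k)$ depends only on that component regardless of regularity, and your main argument already handles this.
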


\begin{conjecture}\label{sym-pat}
	Given $ G=(V,E) $, $ G $ is vertex symmetric if all $ |V| $ collapse tomographies of $ G $ match each other, i.e., the collapse pattern of a vertex symmetric graph consists of $ |V| $ equal collapse tomographies. 
\end{conjecture}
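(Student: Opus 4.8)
The plan is to deduce \Cref{sym-pat} from the earlier (conjectural) \Cref{nail-iso-pat} together with the definition of vertex symmetry, so that the whole statement collapses to an unwinding of definitions. Assume every one of the $|V|$ collapse tomographies of $G$ equals one common value $\tau$. I would first show that $\tau$ by itself already determines the collapse pattern of \emph{every} nailed graph $G(v_a)$. Indeed $C^{tom}(G,v_a)=[[P_V(G^{col}(v_a,i));P_E(G^{col}(v_a,i))]:i=1,\dots,l_a]$, so equality of all tomographies forces all the lengths $l_a$ to coincide (call the common value $l$) and, for each $i\ge 1$, forces the multiplicity $m_i:=|V^{col}(v_a,i)|=|P_V(G^{col}(v_a,i))|$ to be independent of $a$ (with $m_0=1$ always). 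Hence the layer-$j$ term of $C^{pat}(G(v_a))=[\{C^{tom}(G,v_i):v_i\in V^{col}(v_a,j)\}:j=0,\dots,l]$ is simply $\tau$ taken with multiplicity $m_j$, so $C^{pat}(G(v_a))$ does not depend on $a$; in particular $C^{pat}(G(v_i))$ matches $C^{pat}(G(v_j))$ for all $i,j$.

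Next I would invoke \Cref{nail-iso-pat} with $G'=G$ and $(v_a,v_b)=(v_i,v_j)$: matching collapse patterns of the two nailed graphs produce an isomorphism between them under the constraint $\{\langle v_i,v_j\rangle\}$, i.e. $GI(G(v_i),G(v_j),\{\langle v_i,v_j\rangle\})=True$ for every pair $i,j$. By the definition of vertex symmetry given earlier, this is precisely the statement that $G$ is vertex symmetric. (\Cref{sym-vertex} and \Cref{sym-pat-l} furnish the converse implications and here serve only as consistency checks.)

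The honest caveat is that this route secures \Cref{sym-pat} only at the cost of \Cref{nail-iso-pat}, which is itself unproven, so the real difficulty is merely relocated into the nailed-graph isomorphism conjecture. A self-contained argument avoiding \Cref{nail-iso-pat} would instead build an isomorphism $G(v_i)\to G(v_j)$ directly and inductively on layers: put $\phi(v_i)=v_j$, and, once layers $0,\dots,j$ have been matched, extend $\phi$ across layer $j+1$ using the matching of $P_V$ and $P_E$ at that layer. The hard part will be exactly that the collapse tomography is strictly coarser than what such an extension needs: it stores only the multiset of degrees of a layer's vertices and the multiset of edge-degrees of the \emph{within-layer} edges, and is therefore blind both to the isomorphism type of the subgraph a layer induces and to \emph{every} adjacency between consecutive layers (which the collapse deletes and the nailed graph later restores). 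Bridging that gap would amount to proving that the way these degree and edge-degree statistics propagate through every breadth-first layering from every base point is rigid enough to manufacture a global automorphism of $G$ — plausible for highly structured graphs, but also exactly where I would expect a counterexample to live, the obvious suspects being distance-regular graphs that fail to be vertex-transitive; this is consistent with the author's stated plan to hunt for counterexamples by computer.
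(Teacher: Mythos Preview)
The paper does not prove this statement---it is left as a conjecture---and the only justification offered is the one-line remark that \Cref{sym-pat} ``is a corollary of Conjecture~\ref{nail-iso-pat}.'' Your proposal makes precisely this reduction explicit, correctly supplies the missing details (equal tomographies force equal layer sizes, hence equal nailed collapse patterns, hence \Cref{nail-iso-pat} applies), and is appropriately candid that this merely relocates the difficulty; your closing remark about distance-regular but non-vertex-transitive graphs as the natural hunting ground for counterexamples is apt and goes beyond what the paper says.
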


Conjecture \ref{sym-pat} is a corollary of Conjecture \ref{nail-iso-pat}. In the algorithm for graph isomorphism, we only use Conjecture \ref{sym-pat}, as it is quite true intuitively. 

Let's call a graph $ G $ \emph{vertex indistinguishable} if all of its $ |V| $ collapse patterns $ C^{pat}(G(v_i)) $, with $ v_i\in V $, are equivalent. The question corresponding to Conjecture \ref{sym-pat} is ``Is there any graph that is vertex indistinguishable but not vertex symmetric?''.

\begin{lemma}\label{edge-sym-pat-l}
	Given $ G=(V,E) $, if $ G $ is edge symmetric, then all $ |E| $ collapse patterns of edge nailed graphs $ G(v_i,v_j) $, with $ (v_i,v_j)\in E $, match each other. 
\end{lemma}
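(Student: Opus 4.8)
The plan is to run the same argument that underlies \Cref{sym-pat-l}, transported from single vertices to edges. It suffices to fix two edges $(v_i,v_j)$ and $(v_{i'},v_{j'})$ of $G$ together with an isomorphism $\pi\colon G((v_i,v_j))\to G((v_{i'},v_{j'}))$ respecting the constraint $\{\langle (v_i,v_j),(v_{i'},v_{j'})\rangle\}$ — such a $\pi$ exists for every ordered pair of edges precisely because $G$ is edge symmetric — and to show that this forces the collapse pattern of $G((v_i,v_j))$ to match that of $G((v_{i'},v_{j'}))$. Since ``matches'' on these ordered lists of multi-sets of collapse tomographies is just equality of the underlying sorted data, it is an equivalence relation, so establishing the claim for each pair yields that all $|E|$ patterns match each other.

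The first real step is the layer analysis. The collapse $G^{col}((v_i,v_j))$ is nothing but the breadth-first decomposition of the component of $G$ containing the edge $(v_i,v_j)$, started from the trigger set $\{v_i,v_j\}$, so its layers are exactly the graph-distance classes from that set. Because $\pi$ is a graph isomorphism carrying $\{v_i,v_j\}$ onto $\{v_{i'},v_{j'}\}$ (the constraint), it preserves distances and hence maps $V^{col}((v_i,v_j),m)$ bijectively onto $V^{col}((v_{i'},v_{j'}),m)$ for every layer index $m$. Reindexing the target multi-set along this bijection gives, for each $m$,
\[ \{C^{tom}(G,w): w\in V^{col}((v_{i'},v_{j'}),m)\}=\{C^{tom}(G,\pi(v)): v\in V^{col}((v_i,v_j),m)\}, \]
so matching the two collapse patterns coordinate by coordinate reduces to the pointwise identity $C^{tom}(G,v)=C^{tom}(G,\pi(v))$ for every vertex $v$ in the component.

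The second step is that last identity, namely that the collapse tomography attached to a vertex is an isomorphism invariant — essentially the content already invoked for \Cref{iso-pat-l}. Restricted to the component, $\pi$ is an isomorphism, so it sends the nested collapse $G^{col}(v)$ layerwise to $G^{col}(\pi(v))$ (again because each layer is a distance class from the trigger), and two isomorphic graphs have equal vertex property and equal edge property since the degree multi-set and the edge-degree multi-set are plainly invariant. Stacking these equalities over all layers gives $C^{tom}(G,v)=C^{tom}(G,\pi(v))$, which combined with the previous display finishes the proof.

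The step I expect to be the main obstacle is not any computation but the ``canonicity'' bookkeeping that runs through this whole section: making fully precise that every decomposition in sight — the collapse of the edge-nailed graph and the nested collapses $G^{col}(v)$ buried inside each tomography — is defined intrinsically from adjacency, so that a single isomorphism transports all of it at once; and dealing with a possibly disconnected $G$, where one must observe that $C^{tom}(G,v)$ depends only on the component of $v$ and that edge symmetry then already forces all components to be isomorphic. Once those points are pinned down, the two steps above are routine.
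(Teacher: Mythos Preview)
The paper states this lemma without proof, as one of a sequence of parallel ``easy direction'' lemmas (\Cref{iso-pat-l}, \Cref{nail-iso-pat-l}, \Cref{sym-pat-l}, \Cref{edge-sym-pat-l}, \Cref{arc-sym-pat-l}) all left as routine consequences of the definitions. Your argument is correct and is exactly the intended one: you transport the isomorphism-invariance reasoning implicit in \Cref{sym-pat-l} and \Cref{iso-pat-l} from single triggers to edge triggers, and your bookkeeping on layers-as-distance-classes and on disconnected $G$ is the only place where any care is needed.
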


\begin{conjecture}\label{edge-sym-pat}
	Given $ G=(V,E) $, $ G $ is edge symmetric if all $ |E| $ collapse patterns of edge nailed graphs $ G(v_i,v_j) $, with $ (v_i,v_j)\in E $, match each other. 
\end{conjecture}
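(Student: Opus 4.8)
The plan is to derive \Cref{edge-sym-pat} from a multi-nailed-graph analogue of \Cref{nail-iso-pat}, in exactly the same way that \Cref{sym-pat} is a corollary of \Cref{nail-iso-pat}. Concretely, I would first isolate the statement: for multi-nailed graphs $G(T)$ and $G'(T')$ with $|T|=|T'|$, if $C^{pat}(G(T))$ matches $C^{pat}(G'(T'))$, then $G(T)$ is isomorphic to $G'(T')$ under the constraint that $T$ is mapped bijectively onto $T'$. Specializing to $T=(v_i,v_j)$ and $T'=(v_{i'},v_{j'})$ with $(v_i,v_j),(v_{i'},v_{j'})\in E$, this yields an isomorphism between the edge nailed graphs $G(v_i,v_j)$ and $G(v_{i'},v_{j'})$ that carries $\{v_i,v_j\}$ onto $\{v_{i'},v_{j'}\}$, which is precisely the constraint $\{\langle (v_i,v_j),(v_{i'},v_{j'})\rangle\}$ appearing in the definition of edge symmetry (it allows either pairing of the endpoints, matching the fact that we are nailing an edge, not an arc).

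Given that, the proof is short. The hypothesis of \Cref{edge-sym-pat} is that all $|E|$ collapse patterns $C^{pat}(G(v_i,v_j))$, $(v_i,v_j)\in E$, match each other. Since matching of collapse patterns is transitive, every pair of edge nailed graphs has matching collapse patterns, hence by the specialized statement above every pair is isomorphic under the required constraint; by definition $G$ is edge symmetric. The converse direction is already \Cref{edge-sym-pat-l}, so together they would give an ``iff''.

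To avoid merely trading one conjecture for another, I would try to reduce the multi-nailed analogue back to \Cref{nail-iso-pat} itself. Two routes look plausible. The first is edge subdivision: replace the edge $(v_i,v_j)$ by a path $v_i-u-v_j$ through a fresh vertex $u$, obtaining $\widehat G$; edge-nailing $G$ at $(v_i,v_j)$ then corresponds to vertex-nailing $\widehat G$ at $u$, and the symmetry of the path automatically accounts for the two endpoint pairings. One then has to check that the collapse pattern of the $u$-nailed $\widehat G$ is recoverable from the collapse pattern of the edge nailed $G$, controlling the artefacts introduced by the new degree-$2$ vertex (which perturbs both the layer structure and all vertex and edge degrees). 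The second route is a direct layer-by-layer construction of the isomorphism on $G^{col}(v_i,v_j)$, mirroring whatever argument establishes \Cref{nail-iso-pat}, but with layer $0$ consisting of two vertices and with the label bookkeeping (the constraint carried through the collapse tomography under labeling) extended to two simultaneously nailed base points.

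The main obstacle is the same one that keeps \Cref{nail-iso-pat} a conjecture rather than a lemma: showing that matching of the purely local data recorded in a collapse pattern --- vertex degrees and edge degrees, layer by layer, for every collapse triggered at every vertex --- actually forces a global isomorphism respecting the nailing constraint. The edge-nailed setting only makes this harder, because the two-vertex base point introduces an extra pairing ambiguity that the local data must still be shown to resolve correctly, and because the subdivision trick, if used, contaminates exactly the degree information on which the whole collapse machinery rests.
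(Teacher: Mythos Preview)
The statement you are trying to prove is labeled a \emph{Conjecture} in the paper, not a lemma or theorem, and the paper supplies no proof for it. The authors say explicitly that they have tried and failed to prove this family of conjectures and are instead hoping to find counterexamples by computer. So there is no paper proof to compare your attempt against.

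Your proposal is not a proof either, and you essentially acknowledge this in your final paragraph. What you have written is a reduction: \Cref{edge-sym-pat} would follow from a multi-nailed analogue of \Cref{nail-iso-pat}. That reduction is correct and easy, but the multi-nailed analogue is at least as strong as \Cref{nail-iso-pat} itself, which the paper also leaves open. Your two suggested routes for closing the gap do not succeed. The edge-subdivision idea fails precisely at the step you flag: inserting a new degree-$2$ vertex $u$ perturbs every vertex degree and every edge degree in the graph, and shifts the layer structure of every collapse, so matching of the edge-nailed collapse patterns on $G$ does not transfer to matching of the $u$-nailed collapse patterns on $\widehat G$; you would need an independent argument that the perturbation is uniform across all edges, and that argument is not supplied. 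The direct layer-by-layer route simply restates the problem: it ``mirrors whatever argument establishes \Cref{nail-iso-pat},'' but no such argument exists.

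In short, you have correctly located \Cref{edge-sym-pat} within the paper's hierarchy of conjectures and given a clean reduction to a slightly stronger open statement, but you have not proved anything the paper does not already assert, and the paper itself proves nothing here.
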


In order to have a similar conjecture for arc symmetry, we redefine the collapse pattern for the arc nailed graph $ G(v_a;v_b) $ as an ordered list of collapse patterns: \[ C^{pat}(G(v_a;v_b))=[C^{pat}(G(v_a,v_b)), C^{pat}(G[v_a,v_b](v_a)), C^{pat}(G[v_a,v_b](v_b)) ], \] where $ C^{pat}(G(v_a,v_b)) $ is the collapse pattern of the edge nailed graph $ G(v_a,v_b) $, $ C^{pat}(G[v_a,v_b](v_a)) $ is the collapse pattern of the nailed graph at $ v_a $ obtained by removing the edge $ (v_a,v_b) $ in $ G $ and nail the vertex $ v_a $, and similarly for $ C^{pat}(G[v_a,v_b](v_b)) $.

\begin{lemma}\label{arc-sym-pat-l}
	Given $ G=(V,E) $, if $ G $ is arc symmetric, then all $ 2|E| $ collapse patterns of arc (an edge treated as directed) nailed graphs $ G(v_i;v_j) $, with $ (v_i,v_j)\in E $ and the direction is from $ v_i $ to $ v_j $, match each other. 
\end{lemma}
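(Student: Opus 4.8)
The plan is to mirror the proofs of \Cref{iso-pat-l,nail-iso-pat-l,edge-sym-pat-l}: first observe that arc symmetry forces all of the arc nailed structures to be isomorphic in a constraint-respecting way, and then invoke the fact that a constraint-respecting isomorphism preserves collapse patterns layer by layer. The statement is, in spirit, a formal consequence of ``arc symmetry $\Rightarrow$ pairwise isomorphism of arc nailed graphs'' composed with ``isomorphism $\Rightarrow$ matching collapse pattern''.

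First I would fix two arcs $(v_i;v_j)$ and $(v_{i'};v_{j'})$ with $(v_i,v_j),(v_{i'},v_{j'})\in E$. By the definition of arc symmetry there is an isomorphism $\phi$ from $G(v_i;v_j)$ to $G(v_{i'};v_{j'})$ under the constraint $\{\langle v_i,v_{i'}\rangle,\langle v_j,v_{j'}\rangle\}$, so $\phi(v_i)=v_{i'}$ and $\phi(v_j)=v_{j'}$. Since a multi-nailed graph is just $G$ with a distinguished (here directed) edge recorded, $G$ itself being recovered by adding back the inter-layer edges, $\phi$ is in particular an automorphism of $G$ that sends the edge $(v_i,v_j)$ to the edge $(v_{i'},v_{j'})$ and the ordered pair $(v_i,v_j)$ to $(v_{i'},v_{j'})$.

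Next I would restrict $\phi$ to each of the three components appearing in the redefined arc collapse pattern $C^{pat}(G(v_a;v_b))=[C^{pat}(G(v_a,v_b)),C^{pat}(G[v_a,v_b](v_a)),C^{pat}(G[v_a,v_b](v_b))]$. Because $\phi$ fixes the distinguished edge set-wise, it is an isomorphism of the edge nailed graph $G(v_i,v_j)$ onto $G(v_{i'},v_{j'})$ respecting $\{\langle (v_i,v_j),(v_{i'},v_{j'})\rangle\}$. Because $\phi$ maps $(v_i,v_j)$ to $(v_{i'},v_{j'})$, it is also an automorphism of the edge-deleted graph sending $v_i$ to $v_{i'}$, hence an isomorphism $G[v_i,v_j](v_i)\to G[v_{i'},v_{j'}](v_{i'})$ respecting $\{\langle v_i,v_{i'}\rangle\}$, and symmetrically an isomorphism $G[v_i,v_j](v_j)\to G[v_{i'},v_{j'}](v_{j'})$ respecting $\{\langle v_j,v_{j'}\rangle\}$. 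Applying \Cref{nail-iso-pat-l} to the two nailed-graph components, together with the evident edge-nailed analogue of it (same proof: a constraint-respecting isomorphism carries the breadth-first layers of one collapse onto those of the other, and the vertex and edge properties of each layer are multisets of degrees, hence isomorphism invariants), to the edge-nailed component, I get that each of the three collapse patterns of $G(v_i;v_j)$ matches the corresponding one of $G(v_{i'};v_{j'})$, so the ordered triples match; i.e.\ $C^{pat}(G(v_i;v_j))$ matches $C^{pat}(G(v_{i'};v_{j'}))$. Since the pair of arcs was arbitrary, all $2|E|$ arc collapse patterns match.

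The hard part will not be the bookkeeping but two points in the middle of the argument: making the edge-nailed analogue of \Cref{nail-iso-pat-l} precise (it is completely parallel, but it has not been stated), and, more delicately, handling the case in which deleting $(v_i,v_j)$ disconnects $G$. In that case $G[v_i,v_j](v_i)$ no longer reaches all of $V$, and one must verify that $\phi$ still maps the layers reached from $v_i$ onto the layers reached from $v_{i'}$ in $G[v_{i'},v_{j'}]$. This holds because $\phi$ restricted to the edge-deleted graph remains a graph automorphism, so it commutes with the breadth-first layer construction defining the collapse; but this is the step I would write out explicitly rather than wave through.
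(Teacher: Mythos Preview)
The paper does not give a proof of this lemma at all; it is stated as one of a block of ``obvious'' lemmas (\Cref{iso-pat-l,nail-iso-pat-l,sym-pat-l,edge-sym-pat-l,arc-sym-pat-l}) that are asserted without argument, the implicit justification being the generic principle ``isomorphism preserves the collapse layers, hence the collapse tomographies, hence the collapse pattern.'' Your write-up is correct and is exactly the natural unpacking of that principle for the arc-nailed case: use the arc-symmetry isomorphism $\phi$ as an automorphism of $G$ carrying the ordered pair $(v_i,v_j)$ to $(v_{i'},v_{j'})$, then check that $\phi$ respects each of the three components in the definition of $C^{pat}(G(v_a;v_b))$ and appeal to \Cref{nail-iso-pat-l} (and its edge-nailed analogue) componentwise.

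Two minor remarks. First, the ``hard parts'' you flag are not really obstacles here: the edge-nailed analogue of \Cref{nail-iso-pat-l} is an immediate special case of the same observation (an isomorphism respecting the trigger set maps each BFS layer bijectively onto the corresponding layer, so vertex and edge properties match layerwise), and the paper would certainly regard it as already covered. Second, the disconnection worry for $G[v_i,v_j](v_i)$ is handled by exactly the argument you sketch --- $\phi$ restricted to the edge-deleted graph is still an automorphism, hence commutes with the layer construction --- so there is no genuine gap, only bookkeeping. In short: your proof is more detailed than anything the paper offers, and it is the intended (and essentially only) argument.
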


\begin{conjecture}\label{arc-sym-pat}
	Given $ G=(V,E) $, $ G $ is arc symmetric if all $ 2|E| $ collapse patterns of arc nailed graphs $ G(v_i;v_j) $ match each other. 
\end{conjecture}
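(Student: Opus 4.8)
\section{Proof proposal for \Cref{arc-sym-pat}}

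The plan is to establish \Cref{arc-sym-pat} as a corollary of \Cref{nail-iso-pat} together with the natural multi‑nailed analogue of \Cref{iso-pat} (whose study the text defers to future work), mirroring the way \Cref{sym-pat} follows from \Cref{nail-iso-pat} and \Cref{edge-sym-pat} follows from the multi‑nailed version. I would first normalise: assume $G$ is connected, since for a disconnected $G$ arc symmetry means all components are pairwise isomorphic and each is arc symmetric, and the pattern hypothesis passes to components. Under connectedness each arc nailed graph $G(v_i;v_j)$ is just $G$ carrying a distinguished directed edge, so arc symmetry says exactly that for every pair of arcs $(v_i;v_j)$ and $(v_{i'};v_{j'})$ there is $\pi\in\mathrm{Aut}(G)$ with $\pi(v_i)=v_{i'}$ and $\pi(v_j)=v_{j'}$. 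It therefore suffices to manufacture such a $\pi$ from the hypothesis that all $2|E|$ triples $[\,C^{pat}(G(v_a,v_b)),\,C^{pat}(G[v_a,v_b](v_a)),\,C^{pat}(G[v_a,v_b](v_b))\,]$ match.

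Next I would unpack the hypothesis coordinatewise. Matching of the first coordinates says all edge‑nailed collapse patterns agree, whence (by \Cref{edge-sym-pat}, itself a corollary of the same conjectures) $G$ is edge symmetric, so there is $\phi\in\mathrm{Aut}(G)$ with $\phi(\{v_i,v_j\})=\{v_{i'},v_{j'}\}$. If $\phi$ respects the orientation we are done; otherwise $\phi$ flips it and we must find $\sigma\in\mathrm{Aut}(G)$ with $\sigma(v_{i'})=v_{j'}$ and $\sigma(v_{j'})=v_{i'}$, and then take $\pi=\sigma\circ\phi$. The second and third coordinates are there to supply $\sigma$: applying the $2|E|$‑fold hypothesis to the two orientations of a single edge forces $C^{pat}(G[v_i,v_j](v_i))$ to match $C^{pat}(G[v_i,v_j](v_j))$, hence (substituting the edge $(v_{i'},v_{j'})$) $C^{pat}(G[v_{i'},v_{j'}](v_{i'}))$ matches $C^{pat}(G[v_{i'},v_{j'}](v_{j'}))$, and now \Cref{nail-iso-pat} genuinely applies to the single‑nailed graphs of the edge‑deleted graph $G-(v_{i'},v_{j'})$ and yields an isomorphism $\gamma$ of $G-(v_{i'},v_{j'})$ with $\gamma(v_{i'})=v_{j'}$.

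The crux — and the step I expect to be the main obstacle — is the gluing: showing that $\gamma$, or a suitable composition of the isomorphisms handed to us by \Cref{nail-iso-pat}, can be chosen so that it \emph{also} sends $v_{j'}$ to $v_{i'}$, i.e.\ preserves the pair $\{v_{i'},v_{j'}\}$, so that re‑inserting the deleted edge converts it into the desired automorphism $\sigma$ of $G$. The difficulty is structural: \Cref{nail-iso-pat} constrains only the nailed vertex, whereas the collapse pattern of $G[v_{i'},v_{j'}](v_{i'})$ is computed \emph{inside} the edge‑deleted graph and does not literally record which edge was removed, so nothing in the hypothesis directly pins the image of $v_{j'}$. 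One must argue that the deleted partner is nevertheless determined up to the available isomorphisms — for instance that $v_{j'}$ is singled out among the non‑neighbours of $v_{i'}$ in $G-(v_{i'},v_{j'})$ by a distinctive (labelled) collapse‑tomography signature arising from its lowered degree, or by exploiting the full redundancy of having \emph{all} $2|E|$ triples match and composing several of the resulting single‑nailed isomorphisms into one that is simultaneously compatible with $\phi$ and with the edge deletion.

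To finish, I would also have to (i) state and use the multi‑nailed analogue of \Cref{iso-pat} needed to interpret the first coordinate, (ii) check that the isomorphisms produced for the three coordinates can be made mutually compatible rather than merely coexisting, and (iii) dispatch degenerate small cases. These are exactly the combinatorial points the paper flags as verifiable only by computer, so the realistic outcome is a proof of \Cref{arc-sym-pat} conditional on \Cref{nail-iso-pat} and its multi‑nailed counterpart, on the same footing as \Cref{sym-pat} and \Cref{edge-sym-pat}; the forward direction needed for consistency is already \Cref{arc-sym-pat-l}.
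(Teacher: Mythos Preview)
The paper gives no proof of \Cref{arc-sym-pat}: it is stated as a conjecture, and the authors explicitly say they ``have tried our best to prove them'' but ``cannot'', proposing instead to search for counterexamples by computer. So there is nothing to compare your attempt against. The only reduction the paper records among its conjectures is that \Cref{sym-pat} is a corollary of \Cref{nail-iso-pat}; it makes no analogous claim for \Cref{arc-sym-pat}.

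Your proposal is therefore not a proof but a conditional reduction, and you are candid about this in the final paragraph. Even as a reduction, though, the argument has the gap you yourself flag: from the matching of the second and third coordinates you obtain, via \Cref{nail-iso-pat}, an isomorphism $\gamma$ of $G-(v_{i'},v_{j'})$ sending $v_{i'}$ to $v_{j'}$, but nothing in the hypothesis forces $\gamma(v_{j'})=v_{i'}$. The collapse pattern $C^{pat}(G[v_{i'},v_{j'}](v_{i'}))$ is an invariant of the \emph{nailed} graph only; the location of the deleted partner $v_{j'}$ is not recorded by the constraint $\{\langle v_{i'},v_{j'}\rangle\}$ that \Cref{nail-iso-pat} respects, and your suggested fixes (a distinctive degree signature for $v_{j'}$, or composing several single-nailed isomorphisms) are heuristics rather than arguments --- in a vertex-regular graph the deleted endpoint need not be distinguished by degree among the non-neighbours of $v_{i'}$. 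Without closing this step you have at best shown that \Cref{arc-sym-pat} follows from \Cref{edge-sym-pat} together with a \emph{strengthened} form of \Cref{nail-iso-pat} that controls two vertices simultaneously, which is a different (and no weaker) assumption than the conjectures actually stated in the paper.
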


The quantity, collapse pattern, contains a lot of information of a graph, so it may be a good argument to distinguish graphs. Up to now, our conjectures have never failed. Although we have tried our best to prove them, we cannot. Thus, we hope of finding a counterexample by a computer. 

Now let's introduce the dual graph. Given a graph $ G=(V,E) $, its \emph{dual graph} is $ \overline{G}=(V,\overline{E}) $, where $ (v_i,v_j)\in \overline{E} $ if and only if $ (v_i,v_j)\notin E $.

\begin{lemma}\label{dual}
	$ G $ is isomorphic to $ G' $ if and only if $ \overline{G} $ is isomorphic to $ \overline{G'} $.
\end{lemma}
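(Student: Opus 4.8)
The plan is to show that the \emph{same} vertex bijection which witnesses that $G$ is isomorphic to $G'$ also witnesses that $\overline{G}$ is isomorphic to $\overline{G'}$, and then to obtain the converse direction essentially for free by observing that dualization is an involution.

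First I would record the elementary fact that $\overline{\overline{G}}=G$. The dual operation toggles each unordered pair of distinct vertices between being an edge and being a non-edge, so applying it twice restores the original edge set $E$; and since $\overline{G}$ has the same vertex set $V$ as $G$, the vertex side is unaffected as well. I would also note in passing that if $\pi$ is an isomorphism from $G$ to $G'$ then $\pi$ is in particular a one-to-one correspondence between $V$ and $V'$, so $|V|=|V'|$, and hence $\overline{G}$ and $\overline{G'}$ have vertex sets of equal size, as required for them to be isomorphic.

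Next, for the forward implication, suppose $\pi:V\rightarrow V'$ is an isomorphism from $G$ to $G'$. By definition, for all distinct $v_i,v_j\in V$ we have $(v_i,v_j)\in E$ if and only if $(\pi(v_i),\pi(v_j))\in E'$. Negating both sides of this biconditional is legitimate because, under the standing no-self-loop convention, $E$ and $\overline{E}$ partition the set of unordered pairs of distinct vertices of $V$, and likewise $E'$ and $\overline{E'}$ partition those of $V'$; this yields $(v_i,v_j)\in\overline{E}$ if and only if $(\pi(v_i),\pi(v_j))\in\overline{E'}$. Hence the very same bijection $\pi$ is an isomorphism from $\overline{G}$ to $\overline{G'}$, which proves that $G$ isomorphic to $G'$ implies $\overline{G}$ isomorphic to $\overline{G'}$.

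Finally, to get the reverse implication, I would apply what was just proved with $\overline{G}$ and $\overline{G'}$ in place of $G$ and $G'$: if $\overline{G}$ is isomorphic to $\overline{G'}$, then $\overline{\overline{G}}$ is isomorphic to $\overline{\overline{G'}}$, and by the involution fact this says precisely that $G$ is isomorphic to $G'$. Combining the two implications gives the stated equivalence. There is no real obstacle here; the only point requiring any care is the bookkeeping flagged above, namely that the negation step must range over pairs of \emph{distinct} vertices so that ``not an edge of $G$'' coincides with ``an edge of $\overline{G}$''.
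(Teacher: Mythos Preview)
Your proposal is correct and follows essentially the same approach as the paper: both show that the same bijection $\pi$ witnessing $G\cong G'$ also witnesses $\overline{G}\cong\overline{G'}$ by negating the edge biconditional. You are simply more explicit than the paper about the involution $\overline{\overline{G}}=G$ for the converse direction and about restricting attention to distinct vertex pairs.
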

\begin{proof}
	For any isomorphism $ \pi $ between $ G $ and $ G' $, $ (v_i,v_j) $ in $ G $ if and only if $ (\pi(v_i), \pi(v_j)) $ in $ G' $. So it is also true that $ (v_i,v_j) $ not in $ G $ if and only if $ (\pi(v_i), \pi(v_j)) $ not in $ G' $.
\end{proof}

When a graph of size $ n $ is vertex regular of a large degree $ w>\frac{n}{2} $, it seems easier to consider its dual graph, which is vertex regular of a smaller degree $ n-1-w<\frac{n}{2} $.

\begin{lemma}\label{comp}
	Given two complete graphs $ G $ and $ G' $, if $ G $ has $ n $ vertexes with vertex $ v_i $ labeled by $ L(v_i) $ and $ G' $ has $ n $ vertexes with vertex $ v'_j $ labeled by $ L(v'_j) $, then $ G $ is isomorphic to $ G' $ if and only if $ \{L(v_i):i=1,2,\cdots,n \} $ matches $ \{L(v'_j):j=1,2,\cdots,n \} $.
\end{lemma}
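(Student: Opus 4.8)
The plan is simply to unwind the definitions, since everything collapses once one observes that completeness makes the adjacency condition vacuous. The relevant notion of isomorphism here is that of \emph{labeled} graphs: a bijection $\pi\colon V\to V'$ that preserves adjacency and also respects labels, i.e.\ $L(v_i)=L(\pi(v_i))$ for every $i$. First I would record the trivial observation that, because $G$ and $G'$ are complete, $(v_i,v_j)\in E$ holds for all $i\neq j$ and $(v'_k,v'_l)\in E'$ holds for all $k\neq l$; hence any bijection $\pi$ automatically satisfies ``$(v_i,v_j)\in E$ iff $(\pi(v_i),\pi(v_j))\in E'$''. So for complete graphs an isomorphism is nothing more than a label-preserving bijection between the vertex sets.

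For the forward direction, take an isomorphism $\pi$. Since $\pi$ is a label-preserving bijection, the list $L(v_1),\dots,L(v_n)$ is a permutation of $L(v'_1),\dots,L(v'_n)$; therefore the two multi-sets have the same sorted sequence, i.e.\ $\{L(v_i):i=1,\dots,n\}$ matches $\{L(v'_j):j=1,\dots,n\}$. For the backward direction, assume the two multi-sets match. By the definition of matching, their sorted sequences coincide; in particular both have exactly $n$ entries, so one may pair up equal labels to build a bijection $\pi\colon V\to V'$ with $L(v_i)=L(\pi(v_i))$ for every $i$. By the observation of the first paragraph this $\pi$ is already an isomorphism of the (complete) graphs, and it respects labels by construction, so $G$ is isomorphic to $G'$.

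The hard part here is essentially nonexistent: the only point that needs to be stated carefully is that the intended notion of isomorphism for labeled complete graphs is ``label-preserving bijection'', after which completeness trivializes the edge condition and the lemma reduces to the tautology that two finite multi-sets admit a bijection respecting membership precisely when they are equal (match). I would therefore keep the final write-up to just a few lines.
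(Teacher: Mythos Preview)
Your argument is correct. The paper, however, proves this via a different (equally short) route: it invokes the preceding \Cref{dual} and simply observes that the dual of a complete graph on $n$ vertices is the edgeless graph on $n$ vertices, so that labeled isomorphism of the complete graphs is equivalent to labeled isomorphism of $n$ isolated vertices, which is manifestly the matching of the two label multi-sets. Your approach instead stays on the complete-graph side and notes that completeness makes the adjacency condition vacuous, so any bijection is an (unlabeled) isomorphism and only the label condition remains. Both arguments are one-liners; the paper's buys reuse of \Cref{dual}, while yours is self-contained and avoids that dependency.
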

\begin{proof}
	The dual graph of a complete graph of size $ n $ is $ n $ independent vertexes.
\end{proof}

\subsection{Variations of Collapse Pattern}\label{var-pat}
In the remaining part of Section \ref{prepa}, we provide more discussions on our conjectures, which is not important for the next section. The reader may skip it if not interested.

In case that the above conjectures fail, we can vary the definitions of our collapse pattern to add more details of the graph.

Given a nailed graph $ G(v_a) $, let $ G^{col}(v_a,j) $, $ j=0,1,2,\cdots,l $, be $ l+1 $ layers of its collapse, and let $ G^{ex}(v_i;v_a,j) $ be the extension with a base point $ v_i\in V^{col}(v_a,j) $ and the base subgraph $ G^{col}(v_a,j) $. The \emph{varied collapse pattern} of $ G(v_a) $ is an ordered list of multi-sets of ordered collapse tomographies, i.e., 
\begin{align*}
	C^{pat}(G(v_a))=[\{[&C^{tom}(G,v_i);C^{tom}(G^{col}(v_a,j),v_i);\\
	&C^{tom}(G^{ex}(v_i;v_a,j),v_i) ]: v_i\in V^{col}(v_a,j) \}: j=0,1,\cdots l ]. 
\end{align*}

This varied definition encodes more information of the nailed graph.

For an ordinary graph, we also change the definition as \[ C^{pat}(G)=\{C^{pat}(G(v_i)): v_i\in V \}. \]

Let $ G(T) $ be a multi-nailed graph with nailed vertexes $ T $ and its collapse $ G^{col}(T) $, which has $ l+1 $ layers. The corresponding \emph{varied collapse pattern} of $ G(T) $ is 
\begin{align*}
C^{pat}(G(T))=[\{[&C^{tom}(G,v_i);C^{tom}(G^{col}(T,j),v_i);\\
&C^{tom}(G^{ex}(v_i;T,j),v_i) ]: v_i\in V^{col}(T,j) \}: j=0,1,\cdots l ]. 
\end{align*}

\begin{lemma}
	All of our definitions of collapse tomography and collapse pattern are well-defined and computable in polynomial time.
\end{lemma}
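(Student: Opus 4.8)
The plan is to verify the two claims — well-definedness and polynomial-time computability — for each quantity in the order of increasing complexity, so that the more elaborate objects can reuse the analysis of the simpler ones. First I would handle the base objects: the vertex property $P_V$, the edge property $P_E$, and the edge degree $deg(v_i,v_j)=|V_i|+|V_j|-|V_i\cap V_j|$. Each is a multi-set of integers read off directly from adjacency information; $|V_i\cap V_j|$ costs $O(n)$ per edge, so $P_E(G)$ costs $O(n|E|)=O(n^3)$ and $P_V(G)$ costs $O(n^2)$. Well-definedness is immediate because a multi-set does not depend on the order in which its elements are enumerated.

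Next I would treat the collapse $G^{col}(v_k)$ and its layers $G^{col}(v_k,j)$. Here the key observation is that each vertex $v_y$ lands in exactly one layer — the first layer at which it becomes reachable along the BFS-style recursion, since the recursion at layer $i+2$ explicitly excludes vertices already placed in layers $i$ and $i+1$ — so the partition into layers is a well-defined function of $G$ and $v_k$, and it is exactly the partition of the connected component of $v_k$ by graph distance from $v_k$. Computing all layers is a single breadth-first search, $O(n+|E|)$, and reading off $[[P_V(G^{col}(v_k,j));P_E(G^{col}(v_k,j))]:j]$ then costs $O(n^3)$ in total across all layers; hence $C^{tom}(G,v_k)$ is well-defined and polynomial, and iterating over the $n$ triggers gives $C^{pat}(G)$, $C^{pat}(G(v_a))$, and $C^{pat}(G(T))$ in polynomial time (the collapse-pattern objects are just polynomially-sized ordered lists of multi-sets of the already-computed tomographies, so no new work of a different order is incurred). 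The labeled variants $C^{tom}(G,L,v_i)$ and $C^{pat}(G(v_a),L)$ add a bounded label $L(v)$ to each tuple and change nothing asymptotically, provided $L$ takes values in a set of size $\mathrm{poly}(n)$, which is guaranteed by the remark that the total number of labels used is finite.

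For the varied collapse patterns of Section~\ref{var-pat} I would invoke the definitions of the extension $G^{ex}(v_i,\beta)$ with base subgraph $\beta=G^{col}(v_a,j)$ (resp. $G^{col}(T,j)$): the extension is again obtained by a deterministic closure process — start from $v_i$ and the neighbours of $v_i$ outside $\beta$, then repeatedly pull in all neighbours of any vertex that lies in the extension but outside $\beta$ — so $V^{ex}(v_i;v_a,j)$ is a well-defined subset computable by one more BFS restricted to $V\setminus\beta$, costing $O(n+|E|)$ per base point and hence $O(n^2+n|E|)$ per layer, $O(n^3)$ overall. The varied pattern then nests three already-analysed tomographies (of $G$, of the base subgraph, and of the extension) inside a polynomially-sized ordered list of multi-sets, so it too is well-defined and polynomial, with the same reasoning applied recursively to $C^{pat}(G)=\{C^{pat}(G(v_i)):v_i\in V\}$.

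The main obstacle I anticipate is not the time bound — which is a routine bookkeeping of nested BFS passes — but pinning down \emph{well-definedness} cleanly, i.e. showing each object is a genuine function of the (finitely presented) input and not of incidental choices. The subtle points are: (i) that the layer recursion terminates and partitions the component uniquely (needs the observation that once all vertices of a component have been placed, the next layer is empty, so the list of layers is finite and canonical); (ii) that the extension closure terminates and is independent of the order in which neighbours are added (a standard least-fixed-point argument: $V^{ex}$ is the smallest set satisfying the closure conditions, which exists and is unique); and (iii) that "multi-set" and "match" are used consistently so that the ordered lists are compared position-by-position as specified. Once these three points are nailed down, the rest is a tabulation of costs, and the $O(n^3)$-per-tomography bound times the $O(n)$ or $O(n^2)$ triggers/nailed-pairs gives polynomial time for every variant.
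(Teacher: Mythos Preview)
The paper states this lemma without proof, so there is no ``paper's own proof'' to compare against; your proposal supplies exactly the routine verification that the author left implicit, and it is correct. Your identification of the collapse layers with the BFS distance partition, and of the extension $V^{ex}$ with the least fixed point of the given closure rules, are the right ways to nail down well-definedness, and the cost tabulation is sound. One small remark: your appeal to ``the total number of labels used is finite'' to bound label size is slightly loose, since the paper does not explicitly say the label alphabet is polynomial; but in the intended use inside the algorithms the labels are generated by at most $O(n^2)$ pairwise isomorphism tests, so the bound holds in context.
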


If $ G $ of size $ n $ is a graph constructed by linking $ n-1 $ vertexes to one vertex $ v $, we call it a \emph{diverging graph} and $ v $ is called the \emph{source} of the graph..

\begin{lemma}\label{diverg}
	If $ G $ is a diverging graph with the source $ v $ and $ n-1 $ vertexes labeled by $ L(v_i) $, and $ G' $ is another diverging graph with source $ v' $ and $ m-1 $ vertexes labeled by $ L(v'_j) $, then $ G(v) $ is isomorphic to $ G'(v') $ if and only if $ \{L(v_i):v_i\in V \} $ matches $ \{L(v'_j):v'_j\in V' \} $.
\end{lemma}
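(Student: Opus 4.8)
The plan is to observe that, once its source is pinned down, a diverging graph carries essentially no combinatorial information beyond the labels of its leaves, so the claim collapses to a multiset equality. First I would note that the nailed graph $G(v)$ of a diverging graph with source $v$ is just the star on $n$ vertexes with $v$ marked: in the collapse triggered by $v$, layer $0$ is $\{v\}$, layer $1$ is the set of all $n-1$ leaves, and layer $2$ is empty since every leaf is adjacent only to $v$; adding back the edges between layers recovers $G$ itself. Consequently, in any isomorphism $\pi$ between $G(v)$ and $G'(v')$ the vertex $v$ must go to $v'$ — this is imposed by the nailing (the constraint $\{\langle v,v'\rangle\}$), and for $n-1\ge 2$ it is in fact already forced because $v$ is the unique vertex of degree $n-1$; the degenerate cases $n\le 2$ are handled directly by the nailing.

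Next I would run the two directions. For the forward implication, given such a $\pi$, its restriction to the leaves is a label-preserving bijection between the $n-1$ leaves of $G$ and the $n-1$ leaves of $G'$, and since $\pi(v)=v'$ we also get $L(v)=L(v')$; hence $\{L(v_i):v_i\in V\}$ matches $\{L(v'_j):v'_j\in V'\}$, which in particular forces $n=m$. For the converse, suppose the two multisets match; peeling off the (equal) source entries shows $n=m$ and that the multiset of leaf labels of $G$ equals that of $G'$, so we may choose a label-preserving bijection $\sigma$ between the two leaf sets; extending $\sigma$ by $v\mapsto v'$ yields a map that preserves all adjacencies, since the edge set of a star is precisely the set of source-to-leaf pairs, and it respects the labels, so it is the required isomorphism.

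I do not anticipate a genuine obstacle here: this is the analogue for stars of \Cref{comp}, and one can in fact deduce it from that lemma, because the dual of a diverging graph is an isolated vertex (the former source) together with the complete graph on the $n-1$ former leaves, after which \Cref{dual} (adapted to the nailed setting in the obvious way) together with \Cref{comp} closes the argument. The one point that needs a line of care is the bookkeeping convention: whether the source is itself labeled and whether it is counted in the displayed multiset $\{L(v_i):v_i\in V\}$. This must be read consistently with the definition of the nailed graph, but it is a matter of notation rather than a mathematical difficulty.
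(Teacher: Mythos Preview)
Your argument is correct. The paper itself states this lemma without proof, presumably treating it as evident in the same spirit as the one-line proof it gives for \Cref{comp}. Your direct two-direction argument is sound, and your alternative route via duality---observing that the dual of a star is an isolated source together with a complete graph on the leaves, and then invoking \Cref{dual} and \Cref{comp}---is exactly in the style of the paper's proof of \Cref{comp}, so it is the natural companion argument the paper would likely have intended had it written one.
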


Suppose we want to define for a nailed graph $ G(v_a) $ a quantity that is useful to distinguish graphs and easy to be proved by induction, let such quantity be $ C^q(G(v_a)) $. We hope that two graphs are isomorphic if and only if their values of the quantity are equivalent.

\begin{theorem}
	Given $ G(v_a) $ and $ {G'}(v_b) $, $ G(v_a) $ is isomorphic to $ {G'}(v_b) $ with the constraint $ \{\langle v_a,v_b \rangle \} $ if and only if $ C^q(G(v_a)) $ matches $ C^q({G'}(v_b)) $, if we define the quantity as \[ C^q(G(v_a))=\{C^q(G^{ex}(v_i,v_a)):v_i\in V_a \}, \] where $ G^{ex}(v_i,v_a) $ is the extension based on $ v_i $ linking to $ v_a $ and the base subgraph consisting only of the edge $ (v_i,v_a) $.
\end{theorem}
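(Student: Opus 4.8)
The plan is to prove both implications simultaneously by strong induction on the size of the non-base part of the (partially nailed) graph. Passing from a nailed or partially nailed graph to one of its extensions $G^{ex}(v_i,v_a)$ moves at least one further vertex into the base subgraph, so this quantity strictly decreases and the recursion defining $C^q$ is well founded — here one must read the recursion with the base subgraph \emph{accumulating} all vertices already used as base points, otherwise for a $2$-connected $G$ an extension can equal $G$ itself and the recursion is circular. Since an extension carries two marked vertices (its base point and the vertex it links to) rather than one nailed vertex, the statement actually proved by induction is the natural generalisation to such objects, of which the theorem is the one-mark case. The base case is the situation with empty non-base part, where $C^q$ records only the edge and the vertex labels, and two such data are isomorphic under the forced constraint exactly when they agree.

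For the forward implication let $\pi\colon G(v_a)\to G'(v_b)$ be an isomorphism with $\pi(v_a)=v_b$. Since $V^{ex}(v_i,v_a)$ and $E^{ex}(v_i,v_a)$ are produced from $(G,v_i,v_a)$ by the purely graph-theoretic closure conditions $1$--$4$, feeding $\pi$ through those conditions gives $\pi\bigl(V^{ex}(v_i,v_a)\bigr)={V'}^{ex}(\pi(v_i),v_b)$, and similarly for edges; one uses here that $v_i\in V_a\iff\pi(v_i)\in V_b$. Thus $\pi$ restricts to an isomorphism $G^{ex}(v_i,v_a)\to{G'}^{ex}(\pi(v_i),v_b)$ respecting both marks, so by the induction hypothesis $C^q(G^{ex}(v_i,v_a))$ matches $C^q({G'}^{ex}(\pi(v_i),v_b))$ for every $v_i\in V_a$. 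As $\pi$ restricts to a bijection $V_a\to V_b$, the multi-sets $\{C^q(G^{ex}(v_i,v_a)):v_i\in V_a\}$ and $\{C^q({G'}^{ex}(v_j,v_b)):v_j\in V_b\}$ coincide, i.e. $C^q(G(v_a))$ matches $C^q(G'(v_b))$.

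For the converse, assume the multi-sets match. By definition of matching there is a bijection $\sigma\colon V_a\to V_b$ with $C^q(G^{ex}(v_i,v_a))$ matching $C^q({G'}^{ex}(\sigma(v_i),v_b))$ for each $v_i$, and the induction hypothesis yields isomorphisms $\phi_i\colon G^{ex}(v_i,v_a)\to{G'}^{ex}(\sigma(v_i),v_b)$ with $\phi_i(v_a)=v_b$, $\phi_i(v_i)=\sigma(v_i)$. It remains to glue these into one isomorphism $G(v_a)\to G'(v_b)$. Because $G(v_a)$ is connected, $V^{ex}(v_i,v_a)$ is the connected component of $v_i$ in $G-v_a$ together with $v_a$; these components partition $V\setminus\{v_a\}$ and all $\phi_i$ agree on $v_a$. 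One first argues — reusing the induction hypothesis to compare the ``branch types'' occurring, the same reasoning that makes $C^q$ detect the component structure at all — that $\sigma$ can be taken to respect this partition, sending each component of $G-v_a$ into a single component of $G'-v_b$ of the same isomorphism type; then choosing one $\phi_i$ per component and checking that their union is a well-defined adjacency-preserving bijection $V\to V'$ finishes the proof.

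The main obstacle is this gluing step. The extensions $G^{ex}(v_i,v_a)$ overlap heavily — when $G-v_a$ is connected they are all equal to $G$ with two marks — so the maps $\phi_i$ need not agree, and the real work is extracting one coherent isomorphism from the mere equality of multi-sets: showing that a $C^q$-preserving matching of neighbours can always be refined to one compatible with the component decomposition, and, underneath that, identifying exactly how much local information $C^q$ retains so that the induction hypothesis is strong enough to be reapplied to the branches. By comparison, the well-foundedness of the recursion, the base case, and the whole forward direction are routine.
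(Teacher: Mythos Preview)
Your inductive strategy is exactly the paper's: set $L(v_i)=C^q(G^{ex}(v_i,v_a))$ and invoke Lemma~\ref{diverg} (the diverging-graph lemma) together with induction; the paper's entire proof is that one sentence. Your well-foundedness worry, however, is misplaced. By the definition of $E^{ex}$, the edge $(v_i,v_a)$ has both endpoints in the base and is therefore omitted from $G^{ex}(v_i,v_a)$; each recursive call thus has strictly fewer edges than its parent, and the recursion terminates with no need for an ``accumulating'' base. Reading it with accumulation, as you propose, changes $C^q$ into a different invariant than the one the theorem is about.

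The gluing difficulty you isolate, on the other hand, is genuine --- and it is equally present in the paper's argument. Lemma~\ref{diverg} is about literal stars with labelled leaves, whereas $G(v_a)$ is not a star: for $2$-connected $G$ the extensions $G^{ex}(v_i,v_a)$ share every vertex, the induction hypothesis yields isomorphisms $\phi_i$ nailed only at $v_i$ (so $\phi_i(v_a)=v_b$ is not forced), and nothing in the setup makes the various $\phi_i$ cohere into a single bijection $V\to V'$. The paper's one-line proof simply does not engage with this. You have correctly located the only substantive step; your proposal stops short of resolving it, but so does the paper.
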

\begin{proof}
	Let $ L(v_i)=C^q(G^{ex}(v_i,v_a)) $, then by induction and Lemma \ref{diverg}, we can prove this theorem.
\end{proof}

\begin{corollary}
	If $ C^q(G(v_a)) $ is a quantity for nailed graphs, then $ C^q(G)=\{C^q(G(v_i)):v_i\in V \} $ is a quantity for normal graphs.
\end{corollary}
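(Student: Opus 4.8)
The goal is to upgrade the nailed-graph statement to normal graphs: I must show that for all $G=(V,E)$ and $G'=(V',E')$, $G$ is isomorphic to $G'$ if and only if the multisets $C^q(G)=\{C^q(G(v_i)):v_i\in V\}$ and $C^q(G')=\{C^q(G'(v'_j)):v'_j\in V'\}$ match. The plan is to treat the two implications separately; in each direction I reduce to the defining property of a quantity for nailed graphs and then move between ``vertex-by-vertex'' data and ``whole graph'' data by elementary multiset bookkeeping.

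For the forward implication I would take an isomorphism $\pi:V\to V'$ and note that, for each $v_i$, it restricts to an isomorphism of the connected component of $v_i$ onto the component of $\pi(v_i)$ carrying $v_i$ to $\pi(v_i)$; since the nailed graph $G(v_i)$ is just that component equipped with the nailed vertex $v_i$ (the collapse layers being recovered as distance from $v_i$), this restriction is an isomorphism $G(v_i)\to G'(\pi(v_i))$ respecting the constraint $\{\langle v_i,\pi(v_i)\rangle\}$. Hence $C^q(G(v_i))$ matches $C^q(G'(\pi(v_i)))$ for every $i$, and since $\pi$ is a bijection $V\to V'$ this says exactly that $\{C^q(G(v_i)):v_i\in V\}$ matches $\{C^q(G'(v'_j)):v'_j\in V'\}$; that is, $C^q(G)$ matches $C^q(G')$.

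For the reverse implication I would unpack ``$C^q(G)$ matches $C^q(G')$'' into a bijection $\sigma:V\to V'$ (so $|V|=|V'|$) with $C^q(G(v_i))=C^q(G'(\sigma(v_i)))$ for all $i$. The defining property then gives that $G(v_i)$ is isomorphic to $G'(\sigma(v_i))$ under $\{\langle v_i,\sigma(v_i)\rangle\}$, so in particular the component of $v_i$ in $G$ is isomorphic to the component of $\sigma(v_i)$ in $G'$, for every vertex $v_i$. If $G$ is connected this already finishes the proof, since then $G(v_i)$ is $G$ itself with a nailed vertex and the isomorphism is global. In the general, possibly disconnected, case I would finish by a counting argument: for each isomorphism type $\tau$ of a connected graph, the number of vertices of $G$ lying in a component of type $\tau$ equals $|\tau|$ times the number of components of $G$ of type $\tau$, and the same holds for $G'$; since $\sigma$ pairs each vertex with one whose component has the same type, these two vertex-counts agree, so dividing by $|\tau|$ forces $G$ and $G'$ to have equally many components of type $\tau$ for every $\tau$. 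Hence $G$ and $G'$ have the same multiset of connected components, and are therefore isomorphic.

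I expect the only genuine obstacle to be this last step of the reverse direction --- promoting the family of component isomorphisms to a single isomorphism of the whole graph. It is subtle precisely because $G'$ may contain several components of the same isomorphism type, so $\sigma$ by itself does not canonically pair up components; the divide-by-$|\tau|$ counting is what removes the ambiguity. Everything else is a direct invocation of the nailed-graph quantity property together with the definition of matching multisets, and polynomial-time computability of $C^q$ for normal graphs is immediate from that for nailed graphs, since $C^q(G)$ merely assembles the $|V|$ values $C^q(G(v_i))$.
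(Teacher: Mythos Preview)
The paper states this corollary without proof, so there is nothing to compare against beyond the implicit claim that it is immediate from the preceding theorem. Your argument is correct and supplies exactly the details the paper omits: the forward direction is the routine observation that a global isomorphism restricts to nailed isomorphisms on each component, and your counting argument for the reverse direction---tallying vertices lying in components of each isomorphism type $\tau$ and dividing by $|\tau|$---correctly handles the disconnected case, which is indeed the only place any work is needed.
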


\begin{lemma}\label{norm-lab}
	If $ C^q(G)=\{C^q(G(v_i)):v_i\in V \} $ is a quantity for normal graphs, then $ C^q(G,L)=\{[C^q(G(v_i)),L(v_i)]:v_i\in V \} $ is a quantity for the case that each vertex is labeled by $ L(v_i) $.
\end{lemma}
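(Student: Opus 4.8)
The plan is to read ``$X$ is a quantity for [a class of labeled objects]'' as the assertion that two objects in the class are isomorphic as labeled objects (respecting the relevant constraint) if and only if their values of $X$ match, and to establish both directions of this equivalence for $C^q(G,L)=\{[C^q(G(v_i)),L(v_i)]:v_i\in V\}$.

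First I would dispatch the easy direction. Suppose $\pi\colon(G,L)\to(G',L')$ is a label-preserving isomorphism, i.e.\ a graph isomorphism $G\to G'$ with $L'(\pi(v))=L(v)$ for every $v$. For each $v_i\in V$, $\pi$ restricts to an isomorphism between the component of $v_i$ in $G$ and the component of $\pi(v_i)$ in $G'$ that carries $v_i$ to $\pi(v_i)$; hence it is an isomorphism $G(v_i)\to G'(\pi(v_i))$ compatible with the constraint $\{\langle v_i,\pi(v_i)\rangle\}$, so $C^q(G(v_i))=C^q(G'(\pi(v_i)))$ since $C^q(\cdot)$ is a quantity for nailed graphs. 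Together with $L(v_i)=L'(\pi(v_i))$ this shows that $v_i\mapsto\pi(v_i)$ realizes a matching of the multisets $\{[C^q(G(v_i)),L(v_i)]\}$ and $\{[C^q(G'(v_j)),L'(v_j)]\}$, so $C^q(G,L)$ matches $C^q(G',L')$.

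For the converse, suppose $C^q(G,L)$ matches $C^q(G',L')$. Matching the multisets of pairs yields a bijection $\sigma\colon V\to V'$ with $C^q(G(v_i))=C^q(G'(\sigma(v_i)))$ and $L(v_i)=L'(\sigma(v_i))$ for all $i$; projecting to the first coordinate already gives that $G$ and $G'$ are isomorphic as unlabeled graphs via the Corollary above. I would upgrade this to a label-preserving isomorphism by induction on $|V|$: pick $v_a\in V$ and its match $v_b=\sigma(v_a)$; since $C^q(G(v_a))=C^q(G'(v_b))$ and $C^q$ is a quantity for nailed graphs, the component of $v_a$ in $G$ is isomorphic to the component of $v_b$ in $G'$ by a map $\rho$ sending $v_a\mapsto v_b$, which — this is the point where one must recurse into how $C^q(G(v_a))$ is built from the extensions $G^{ex}(v_i,v_a)$ of the Theorem, together with \Cref{diverg} and the inductive hypothesis applied one level down — may be taken to be label-preserving on those components. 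Deleting the two isomorphic components leaves labeled graphs $(H,L)$, $(H',L')$ on fewer vertices; since a nailed graph only ever sees its own component, $C^q(H(v_i))=C^q(G(v_i))$ for the surviving $v_i$, so $C^q(H,L)$ still matches $C^q(H',L')$, the inductive hypothesis gives a label-preserving isomorphism $(H,L)\to(H',L')$, and gluing it to $\rho$ finishes.

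The hard part is exactly the parenthetical clause in the inductive step. The bijection $\sigma$ extracted from the multiset match is only guaranteed to respect $C^q$ and $L$ vertex by vertex, not to be a graph isomorphism, and in the presence of graph automorphisms a label-preserving isomorphism of a component need not agree with $\sigma$ on that component. So the real content is to show that whenever $C^q(G(v_a))=C^q(G'(v_b))$ one can choose an isomorphism of the underlying components that simultaneously carries $v_a\mapsto v_b$ and preserves $L$ — the labeled analogue of the Theorem for nailed graphs. This is where the recursive definition of $C^q$ (so that it controls each $G^{ex}(v_i,v_a)$, and hence, one recursion level down, the labels on the rest of the component, via \Cref{diverg} and \Cref{comp}) has to be used in an essential way, rather than the bare hypothesis that $C^q$ is a quantity for normal graphs.
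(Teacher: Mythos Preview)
The paper states this lemma without proof, so there is nothing to compare against directly; what follows is an assessment of your argument on its own merits.

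Your forward direction is correct and cleanly written. Your diagnosis of the converse is also correct: the step you flag as ``the hard part'' is a genuine gap, and in fact the lemma \emph{as literally stated} is false. Take $G=G'=C_4$ with vertices $v_1,v_2,v_3,v_4$ in cyclic order. Since $C_4$ is vertex symmetric, all four nailed graphs $G(v_i)$ are pairwise isomorphic under the appropriate constraint, so $C^q(G(v_1))=\cdots=C^q(G(v_4))=:x$ for \emph{any} isomorphism-complete invariant $C^q$ of nailed graphs. Now let $L=(a,b,c,d)$ and $L'=(b,a,c,d)$ with $a,b,c,d$ pairwise distinct. Then
\[
C^q(G,L)=\{[x,a],[x,b],[x,c],[x,d]\}=C^q(G,L'),
\]
yet the only bijection $V\to V$ matching the labels is the transposition $(v_1\,v_2)$, which is not an automorphism of $C_4$. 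Hence $(G,L)\not\cong(G,L')$ while $C^q(G,L)$ matches $C^q(G,L')$.

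This also shows that your proposed repair---appealing to the recursive structure of $C^q$ from the preceding Theorem---cannot work as stated: the quantity $C^q(G(v_i))$ is defined purely from the unlabeled graph and is blind to $L$, so no recursion into it will ever recover the labels on the rest of the component. What the paper presumably intends (and what its one application of the lemma actually uses) is the special situation where the labels $L(v_i)$ are themselves isomorphism invariants of certain rooted extensions of $G$, so that an automorphism carrying $v_i$ to $v_j$ automatically forces $L(v_i)=L(v_j)$; under that extra hypothesis your inductive scheme goes through. For arbitrary labelings one must instead thread the labels through every level of the recursive definition of $C^q$, not just attach them at the top.
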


\begin{lemma}\label{layer-1}
	If $ G(v_a) $ and $ G'(v_b) $ are two nailed graphs with their collapses $ G^{col}(v_a) $ and $ {G'}^{col}(v_b) $ having only Layer $ 0 $ and Layer $ 1 $, then $ G(v_a) $ is isomorphic to $ G'(v_b) $ if and only if $ G^{col}(v_a,1) $ is isomorphic to $ {G'}^{col}(v_b,1) $.
\end{lemma}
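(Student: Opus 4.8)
The plan is to unwind the definitions and observe that, under the hypothesis, a nailed graph with only Layer $0$ and Layer $1$ is nothing but a \emph{cone} over its Layer $1$. Since $G^{col}(v_a)$ has only Layer $0=\{v_a\}$ and Layer $1=V_a$, the nailed graph $G(v_a)$ has vertex set $\{v_a\}\cup V_a$; re-adding the edges between the two layers puts back exactly the edges $(v_a,v_j)$ with $v_j\in V_a$, all of which are in $G$ by definition of $V_a$. Hence in $G(v_a)$ the vertex $v_a$ is adjacent to every other vertex, and the subgraph induced on $V_a$ is literally $G^{col}(v_a,1)$. In other words $G(v_a)$ is the join of the single vertex $v_a$ with $G^{col}(v_a,1)$, and likewise $G'(v_b)$ is the join of $v_b$ with ${G'}^{col}(v_b,1)$. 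Everything else is transporting isomorphisms across this identification.

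For the ``if'' direction I would take an isomorphism $\sigma\colon G^{col}(v_a,1)\to {G'}^{col}(v_b,1)$ and extend it by $\sigma(v_a)=v_b$. This extension is an isomorphism $G(v_a)\to G'(v_b)$: edges inside $V_a$ are handled by $\sigma$, and any edge incident to $v_a$ is sent to an edge incident to $v_b$ because in both graphs the nailed vertex is adjacent to everything; moreover the map respects the constraint $\langle v_a,v_b\rangle$. For the ``only if'' direction, suppose $\tau\colon G(v_a)\to G'(v_b)$ is an isomorphism, and write $n=|V(G(v_a))|$. Then $v_a$ has degree $n-1$, so $\tau(v_a)$ is a vertex of degree $n-1$ in $G'(v_b)$, i.e.\ a universal vertex; since any two universal vertices of a graph are interchanged by an automorphism (the transposition fixing all other vertices is readily checked to preserve adjacency), we may compose $\tau$ with such an automorphism and assume $\tau(v_a)=v_b$. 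Then $\tau$ restricts to a bijection $V_a\to V'_b$ that preserves and reflects edges, which is precisely an isomorphism $G^{col}(v_a,1)\to {G'}^{col}(v_b,1)$.

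The only place any care is needed is the step that lets us assume the nail is sent to the nail; if, as elsewhere in this section, the statement is read with the constraint $\langle v_a,v_b\rangle$ built into the notion of isomorphism of nailed graphs, that step is vacuous and the proof is just the two extension arguments above. Either way I do not expect a genuine obstacle: the entire content is the first paragraph's observation that the two-layer hypothesis collapses a nailed graph to a cone over its first layer, after which both implications are routine.
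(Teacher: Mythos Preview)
Your argument is correct. The paper states this lemma without proof, presumably because the content is exactly the observation you make in your first paragraph: under the two-layer hypothesis the nailed graph is the cone over its Layer~$1$, so isomorphisms of the cones correspond to isomorphisms of the bases. Your handling of the ``only if'' direction via the universal-vertex transposition is a clean way to dispose of the case where the nailed vertex is not a priori sent to the nailed vertex; as you note, if the constraint $\langle v_a,v_b\rangle$ is taken as part of the notion (which is how the paper uses the lemma in the subsequent theorem), that step is unnecessary and the restriction argument suffices immediately.
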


\begin{theorem}
	Given $ G(v_a) $ a nailed graph, if we define the quantity as \[ C^q(G(v_a))=\{[C^q(G(v_i,V_a)),C^q(G^{ex}(v_i,V_a\cup \{v_a\}))]:v_i\in V^{col}(v_a,1) \}, \] where $ G(v_i,V_a) $ is the graph $ G^{col}(v_a,1) $ nailed at $ v_i $, and $ G^{ex}(v_i,V_a\cup \{v_a\}) $ is the extension based on $ v_i $ with the base subgraph consisting of vertexes $ V_a $ and $ v_a $, then this quantity can distinguish graphs according to isomorphism.
\end{theorem}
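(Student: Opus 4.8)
The plan is to prove the sharper claim that, for nailed graphs $G(v_a)$ and $G'(v_b)$, the multiset $C^q(G(v_a))$ matches $C^q(G'(v_b))$ if and only if $G(v_a)$ is isomorphic to $G'(v_b)$ under the constraint $\{\langle v_a,v_b\rangle\}$, and to do this by strong induction on $n=|V(G(v_a))|$. The recursion in the defining formula is well-founded for this measure: $G(v_i,V_a)$ lives on $V_a$, which omits the nailed vertex, so it has fewer than $n$ vertices; and $G^{ex}(v_i,V_a\cup\{v_a\})$ also omits $v_a$, since $v_a$'s only neighbours lie in $V_a$, which is contained in the base set, while the extension-building closure only ever adds neighbours of vertices lying \emph{outside} the base set. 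The base case $n=1$ is trivial: then $V_a=\varnothing$, $C^q$ is the empty multiset, and any two one-vertex nailed graphs are isomorphic.

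The key structural observation I would use is that, writing $H:=G^{col}(v_a,1)$ and putting on $V(H)=V_a$ the labelling $L(v_i):=C^q(G^{ex}(v_i,V_a\cup\{v_a\}))$, the defining formula says exactly $C^q(G(v_a))=C^q(H,L)$ in the sense of Lemma~\ref{norm-lab}: recall that $H$ nailed at $v_i$ is $G(v_i,V_a)$ and that $\{C^q(H(v_i)):v_i\in V_a\}$ is the normal-graph quantity of $H$ by the preceding corollary. Since $|V(H)|=|V_a|<n$, the induction hypothesis makes $C^q$ a complete invariant for nailed graphs of that size, hence—by the corollary together with Lemma~\ref{norm-lab}—a complete invariant for labelled graphs of that size. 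Therefore $C^q(G(v_a))$ matches $C^q(G'(v_b))$ if and only if there is a label-preserving isomorphism $\varphi\colon H\to H'$ with $L'(\varphi(v_i))=L(v_i)$ for every $v_i\in V_a$.

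What remains is to show that such a $\varphi$ exists precisely when $G(v_a)\cong G'(v_b)$ under $\{\langle v_a,v_b\rangle\}$. One direction is routine: an isomorphism $\pi$ with $\pi(v_a)=v_b$ carries $V_a$ onto $V_b$, restricts to a graph isomorphism $H\to H'$, and, because the extension of a base point is defined by a neighbourhood-closure process relative to the base set, also carries each $G^{ex}(v_i,V_a\cup\{v_a\})$ onto ${G'}^{ex}(\pi(v_i),V_b\cup\{v_b\})$ with the base point fixed; by the induction hypothesis on the extensions (which are smaller) $\pi|_{V_a}$ then preserves the labels. For the converse, I would extend $\varphi$ by $v_a\mapsto v_b$—which is consistent on the two-layer skeleton ``$v_a$ joined to all of $V_a$'' by the reasoning behind Lemma~\ref{layer-1}—and then, since $L(v_i)=L'(\varphi(v_i))$ forces $G^{ex}(v_i,V_a\cup\{v_a\})\cong {G'}^{ex}(\varphi(v_i),V_b\cup\{v_b\})$ by the induction hypothesis (smaller graphs, base point fixed), patch these extension-isomorphisms onto $\varphi$ to build the global $\pi$.

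The hard part will be exactly this patching. The extensions $G^{ex}(v_i,V_a\cup\{v_a\})$ for distinct $v_i\in V_a$ are not disjoint: a layer-$2$ vertex adjacent to two neighbours of $v_a$ lies in both of their extensions, and by the closure property so do all of its own neighbours, including any vertices of $V_a$ pulled back in; consequently the isomorphisms supplied by the induction hypothesis for different $v_i$ impose competing choices on the shared vertices, and they must moreover agree with $\varphi$ on $V_a$. Making the gluing work will require showing that on each shared piece the isomorphism is essentially forced—a rigidity statement that one would hope to derive from completeness of $C^q$ on the (smaller) subgraphs that carry that piece—so that the extension-isomorphisms can be selected coherently and glued into a single $\pi$. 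Turning this coherence into a rigorous argument, rather than the induction bookkeeping or the structural identity of the second paragraph, is where the real difficulty lies; it is also the point at which the proof most visibly leans on the strength of the collapse-pattern invariants, and a careful write-up should either establish the needed rigidity outright or state exactly which additional matching hypothesis it rests on.
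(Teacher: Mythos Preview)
Your approach is exactly the paper's: the paper's entire proof is the sentence ``Let $L(v_i)=C^q(G^{ex}(v_i,V_a\cup\{v_a\}))$, then by induction and \Cref{norm-lab} and \Cref{layer-1}, we can prove this theorem.'' Your second paragraph unpacks this sketch faithfully, and your well-foundedness check (that $v_a$ cannot re-enter the extension because all of its neighbours lie in the base set) is correct.

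Where you go beyond the paper is in your last paragraph, and you are right to flag it. \Cref{layer-1} as stated applies only to nailed graphs whose collapse has exactly two layers; for a general $G(v_a)$ the passage from ``label-preserving isomorphism of $H=G^{col}(v_a,1)$'' to ``isomorphism of $G(v_a)$'' genuinely requires gluing the per-vertex extension isomorphisms, and those extensions overlap (a layer-$2$ vertex adjacent to several members of $V_a$ lies in all of their extensions, and by the closure rule so does its entire connected component outside $V_a\cup\{v_a\}$, together with whichever base vertices that component touches). Equality of the labels $L(v_i)$ and $L'(\varphi(v_i))$ only guarantees the \emph{existence} of an extension isomorphism for each $v_i$ separately, not a coherent family agreeing on the overlaps and with $\varphi$ on $V_a$. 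The paper's one-line proof does not address this either; it simply invokes \Cref{layer-1} outside its stated hypothesis. So your proposal and the paper's proof stand at the same level of completeness---you have merely made the missing step explicit rather than resolved it.
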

\begin{proof}
	Let $ L(v_i)=C^q(G^{ex}(v_i,V_a\cup \{v_a\})) $, then by induction and \Cref{norm-lab} and \Cref{layer-1}, we can prove this theorem.
\end{proof}

As the lack of details of a graph, we cannot accelerate the graph isomorphism problem in the abstract approach. In the next section, we apply a bunch of methods for the acceleration of our algorithm.

% Actually, the labeling for graph automorphism is different from that for graph isomorphism.

\section{Algorithms}\label{algo}

Now, we are ready to present the algorithm for determining whether two graphs are isomorphic.

Note that our algorithm is only based on \Cref{sym-pat} and no other assumption not proved. If we remove this conjecture, we can show its quasi-polynomial time efficiency.\\

Our algorithm for graph isomorphism is $ GI(G,G')=GI(G,G',L_0) $, where $ L_0=\varnothing $.

\noindent\textbf{Algorithm: Graph Isomorphism $ GI(G,G',L_k) $ with Labels }

\begin{enumerate}[leftmargin=4\parindent]
	\item[\textbf{Input:}] Two graphs $ G=(V,E) $ and $ G'=(V',E') $, without self-loop or multi-edge. $ V=\{v_1,v_2,\cdots,v_n \} $, $ V'=\{v'_1,v'_2,\cdots,v'_n \} $. Every vertex $ v\in V\cup V' $ is labeled by $ L_k(v) $.
\end{enumerate} 

\begin{enumerate}[leftmargin=4\parindent]
	\item[\textbf{Output:}] If $ G $ and $ G' $ are isomorphic under the labeling function $ L_k $, output `Yes', otherwise `No'.
\end{enumerate} 

\begin{enumerate}[leftmargin=4\parindent]
	\item[\textbf{Runtime:}] $ O(n^3) $ with \Cref{sym-pat}; $ O(n^{1.5\log n}) $ without \Cref{sym-pat}. The worst case is when the input is two vertex indistinguishable but not vertex symmetric graphs.
\end{enumerate}

%dual graph step 1, recursion (n/2)^2 to n^2
\noindent\textbf{Procedure:} 

\begin{enumerate}[leftmargin=2\parindent,label=\textbf{\arabic*}.,ref=\arabic*]
	\item Compute $ C^{tom}(G,L_k,v_i) $ for all $ v_i\in V $, and $ C^{tom}(G',L_k,v'_j) $ for all $ v'_j\in V' $. If they all match each other, then by \Cref{sym-pat}, $ G $ and $ G' $ are vertex symmetric. Then if $ G $ and $ G' $ are regular of degree $ w\leq \frac{n}{2} $, we call the sub-algorithm $ GI(G,G',C,L_k) $, with $ C=\{\langle v_1,v'_1\rangle \} $. If $ G $ and $ G' $ are regular of degree $ w> \frac{n}{2} $, we call the sub-algorithm $ GI(\overline{G},\overline{G'},C,L_k) $, with $ C=\{\langle v_1,v'_1\rangle \} $. Output `Yes' iff the sub-algorithm returns `Yes'. Here, our target is to find an isomorphism. Note that, if we do not appeal to \Cref{sym-pat}, then we have to check the isomorphism for all $ 2n $ nailed graphs of $ G $ and $ G' $ in the worst case, instead of only one pair. We may use the method `try-and-error' to further accelerate the algorithm, as we believe that there is a non-negligible probability for finding a pair of isomorphic nailed graphs if two graphs are vertex indistinguishable but not vertex symmetric. Because of the use of dual graph, the size of the next occurrence of vertex indistinguishable graph is no larger than $ \frac{n}{2} $. 
	\item Compute $ C(G)=\{C^{pat}(G(v_i),L_k): v_i\in V \} $, and $ C(G')=\{C^{pat}({G'}(v'_j),\newline L_k): v'_j\in V' \} $.  Sort these two multi-sets of collapse patterns. If they do not match, output `No'. Otherwise, suppose the rarest collapse pattern in $ C(G) $ is $ C^{pat}(G(v_x),L_k) $ and the corresponding one in $ C(G') $ is $ C^{pat}({G'}(v'_y),L_k) $, let $ \beta=\{v_i:C^{pat}(G(v_i),L_k)=C^{pat}(G(v_x),L_k) \} $ and $ \beta'=\{v'_j:C^{pat}({G'}(v'_j),L_k)=C^{pat}({G'}(v'_y),L_k) \} $. The base subgraph of $ G $ is $ G^{\beta} $ and that of $ G' $ is $ {G'}^{\beta'} $. The extensions of $ G^{\beta} $ are $ Ex(\beta)=\{G^{ex}(v_i,\beta):v_i\in V^{\beta}\} $ and those of $ {G'}^{\beta'} $ are $ Ex(\beta')=\{{G'}^{ex}(v'_j,\beta'):v'_j\in {V'}^{\beta'}\} $. Call the sub-algorithm $ GI(G_1^{ex}(v_1,\beta_1), G_2^{ex}(v_2,\beta_2), C,L_k) $ for all pairs of extensions $ (G_1^{ex}(v_1,\beta_1), G_2^{ex}(v_2,\beta_2)) $ in the union set $ Ex(\beta)\cup Ex(\beta') $ with $ C=\{\langle v_1,v_2\rangle \} $. Next we assign two extensions with the same label iff they are isomorphic. The labeling function $ L_{k+1} $ assigns two base points with the same label iff $ L_k $ assigns them with the same label and the labels for their extensions are of the same. Then we call the sub-algorithm $ GI(G^{\beta},{G'}^{\beta'}, L_{k+1} ) $, with vertex $ v\in \beta\cup \beta' $ labeled by $ L_{k+1}(v) $. Output `Yes' iff this sub-algorithm returns `Yes'.

%	\begin{enumerate}[ref=(\alph*)]
%	\end{enumerate}
\end{enumerate}

\noindent\textbf{Algorithm: Graph Isomorphism $ GI(G, G',C,L_k) $ with Constraint and Labels }

\begin{enumerate}[leftmargin=4\parindent]
	\item[\textbf{Input:}] Two graphs $ G=(V,E) $ and $ G'=(V',E') $, without self-loop or multi-edge. $ V=\{v_1,v_2,\cdots,v_n \} $, $ V'=\{v'_1,v'_2,\cdots,v'_n \} $. The constraint is $ C=\{\langle v_a,v'_b\rangle\} $, where $ v_a $ is the nailed vertex of $ G(v_a) $ and $ v'_b $ is the nailed vertex of $ G'(v'_b) $. Every vertex $ v\in V\cup V' $ is labeled by $ L_k(v) $.
\end{enumerate} 

\begin{enumerate}[leftmargin=4\parindent]
	\item[\textbf{Output:}] If $ G(v_a) $ and $ G'(v'_b) $ are isomorphic under the constraint $ C $ and the labeling function $ L_k $, then output `Yes'; otherwise `No'.
\end{enumerate} 

%\begin{enumerate}[leftmargin=4\parindent]
%	\item[\textbf{Runtime:}] 
%\end{enumerate}

\noindent\textbf{Procedure:} 

\begin{enumerate}[leftmargin=2\parindent,label=\textbf{\arabic*}.]
	\item Compute $ C^{pat}(G(v_a),L_k) $ and $ C^{pat}({G'}(v'_b), L_k) $. If they do not match, then output `No'. Otherwise, let the collapse $ G^{col}(v_a) $ have $ l+1 $ layers. 
	\item If $ l=1 $, call the sub-algorithm $ GI(G^{col}(v_a,1),{G'}^{col}(v'_b,1),L_k) $. Output `Yes' iff this sub-algorithm returns `Yes'. Here, the input graphs may be again vertex indistinguishable but not vertex symmetric.
	\item For $ l\geq 1 $, from $ x=l-1 $ to $ x=0 $, we update the labels of vertexes in layer $ x $ with the layer $ x+1 $ by the following procedure. Let $ \beta(x) $ be the set of vertexes in layers from $ 0 $ to $ x $ of $ G^{col}(v_a) $ and $ \beta'(x) $ be the set of vertexes in layers from $ 0 $ to $ x $ of $ {G'}^{col}(v'_b) $. Let $ Ex(\beta(x))=\{G^{ex}(v_i,\beta(x)):v_i\in V^{col}(v_a,x)\} $ and $ Ex(\beta'(x))=\{{G'}^{ex}(v'_j,\beta'(x)):v'_j\in {V'}^{col}(v'_b,x) \} $. Suppose the labeling function for layer $ x $ is $ L_{k,x} $ with $ L_{k,l}=L_k $. Call the sub-algorithm $ GI(G_1^{ex}(v_1,\beta_1(x)), G_2^{ex}(v_2,\beta_2(x)), C, L_{k,x+1}) $ for all pairs of extensions $ (G_1^{ex}(v_1,\beta_1(x)), G_2^{ex}(v_2,\beta_2(x))) $ in the union set $ Ex(\beta(x))\cup Ex(\beta'(x)) $ with $ C=\{\langle v_1,v_2\rangle \} $. Usually, we omit the labels of base points when we are determining isomorphism of their extensions. Next we assign two extensions with the same label iff they are isomorphic. The labeling function $ L_{k,x} $ assigns two base points with the same label iff $ L_{k} $ assigns them with the same label and the labels for their extensions are of the same. On the whole, we distinguish two nailed graphs, according to isomorphism, layer-by-layer.
	
\end{enumerate}

\begin{theorem}
	The algorithm $ GI(G,G',L_k) $ can correctly distinguish graphs according to isomorphism and is of polynomial time if \Cref{sym-pat} is true.
\end{theorem}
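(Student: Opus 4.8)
The plan is to prove correctness and the $O(n^3)$ bound together, by strong induction on a well-founded measure attached to each recursive call. For the top routine $GI(G,G',L_k)$ the measure is simply $n=|V|=|V'|$; for the constrained routine $GI(G,G',C,L_k)$ on nailed graphs a lexicographic refinement is needed, e.g.\ the triple $\bigl(n,\ \text{number of layers of }G^{col}(v_a),\ \text{size of the largest }L_k\text{-label class}\bigr)$, because one of its recursive calls does not decrease $n$. Base cases are graphs with vertices of pairwise distinct degree (which split immediately) and single vertices; the complete-graph and diverging-graph endpoints are handled directly by \Cref{comp} and \Cref{diverg}.

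\emph{Correctness of the top routine.} In Step 1 the hypothesis that all $2n$ labelled collapse tomographies match lets us invoke \Cref{sym-pat} to conclude that $G$ and $G'$ are vertex symmetric, hence vertex regular by \Cref{sym-vertex}. Since in a vertex symmetric graph every nailed graph is isomorphic to every other with the nails corresponding, $G\cong G'$ under $L_k$ iff $G(v_1)\cong G'(v'_1)$ under the constraint $\langle v_1,v'_1\rangle$ (if $G$ is disconnected this also uses that vertex symmetry forces all components to be mutually isomorphic, so equality of $n$ makes the multiplicities agree); the replacement of $G,G'$ by their duals when $w>\tfrac{n}{2}$ is justified by \Cref{dual}. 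In Step 2 the ``output No'' case is exactly the contrapositive of \Cref{iso-pat-l}. For the positive case one observes that partitioning $V$ by the $L_k$-collapse-pattern is isomorphism-invariant, so any isomorphism must carry the rarest class $\beta$ onto $\beta'$ and therefore restrict to an isomorphism $G^\beta\to{G'}^{\beta'}$ sending each extension $G^{ex}(v_i,\beta)$ to an isomorphic one; conversely, once the extensions carry correct isomorphism-type labels (supplied by the recursive calls, which are on strictly smaller graphs) an $L_{k+1}$-isomorphism of the base subgraphs lifts back to a full isomorphism, the lift being reconstructed layer by layer via repeated use of \Cref{norm-lab} and \Cref{diverg}. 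A small but essential point: whenever Step 2 is actually reached with the two pattern multisets equal, there must be at least two distinct collapse-pattern classes --- otherwise all $2n$ tomographies would coincide and Step 1 would have fired --- so the rarest class satisfies $|\beta|\le\tfrac{n}{2}$, which is the source of the size reduction.

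\emph{Correctness of the constrained routine, and the main obstacle.} Soundness again follows from matching of collapse patterns (via \Cref{sym-pat-l}), and one induces on the number of layers of $G^{col}(v_a)$. The case $l=1$ is \Cref{layer-1}, reducing to $GI$ on $G^{col}(v_a,1)$, a graph with at most $n-1$ vertices. For $l\ge1$, the downward pass $x=l-1,\dots,1$ relabels layer $x$ from the already-settled layer $x+1$; each extension $G^{ex}(v_i,\beta(x))$ with $x\ge1$ omits $v_a$ (it cannot be reached through $\beta(x)$), so these recursive calls are on graphs of size $<n$ and the induction applies. The delicate point --- and what I expect to be the real obstacle --- is the final step $x=0$: there $\beta(0)=\{v_a\}$ and the single extension $G^{ex}(v_a,\{v_a\})$ is the whole connected component of $v_a$, so $n$ does not drop. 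One must argue that the labelling $L_{k,1}$ produced by the completed higher passes is fine enough that $G(v_a)$ with these labels is isomorphism-equivalent to a diverging graph on $V_a$ with each leaf carrying its extension-label, so that \Cref{diverg} finishes the job; equivalently, that on this call the lexicographic measure (layer count, then largest label class) strictly decreases. Making this intuition into a rigorous, terminating recursion --- verifying that the labels genuinely become complete after finitely many rounds rather than merely asserting it --- is where the bulk of the work lies.

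\emph{Running time.} At each recursion node the dominant cost is building the $n$ collapses ($O(n+|E|)$ each), forming all labelled tomographies and patterns and sorting the resulting multisets, which can be arranged in $O(n^2)$ time, together with the $O(n^2)$ pairwise $GI$ calls on extensions. By the previous two paragraphs every recursive call either halves the vertex count (Step 2 of the top routine, using $|\beta|\le\tfrac{n}{2}$, together with the dual-graph move in Step 1), or strictly drops the vertex count by at least one (the $l=1$ case and the $x\ge1$ extension calls), or strictly drops the lexicographic layer/label measure (the $x=0$ call) --- each of which can occur at most $O(n)$ times along any chain --- so the recursion tree has polynomially many nodes; charging $O(n^2)$ to each gives the bound $O(n^3)$ under \Cref{sym-pat}. (Dropping \Cref{sym-pat} only forbids collapsing Step 1 to a single nailed pair: one must then try all $2n$ nailings of a vertex-indistinguishable graph, and since the dual-graph reduction makes each resulting instance of size at most $\tfrac{n}{2}$, unrolling $T(n)\le 2n\,T(n/2)+\mathrm{poly}(n)$ yields the stated $O(n^{1.5\log n})$.)
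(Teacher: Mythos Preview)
The paper gives no proof of this theorem --- it is simply asserted immediately after the algorithm description --- so there is nothing in the paper to compare your plan against. Your outline is considerably more detailed than anything the paper offers and it correctly isolates several key ingredients: isomorphism-invariance of collapse patterns, the $|\beta|\le n/2$ observation (via the layer-$0$ singleton in $C^{pat}(G(v_a),L_k)$), the role of the dual in Step~1, and the $x=0$ difficulty in the constrained routine.

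That said, two genuine gaps remain. First, the ``lift back'' step is not justified. The extensions $G^{ex}(v_i,\beta)$ for distinct $v_i\in\beta$ are \emph{not} disjoint: any connected component of $G-\beta$ that borders several base points appears in the extension of each of them, and the other base points on its boundary appear too. Consequently, knowing that a base-subgraph isomorphism matches extension-isomorphism labels does not obviously let you glue the per-vertex extension isomorphisms into one well-defined bijection on $V\setminus\beta$. Neither \Cref{norm-lab} nor \Cref{diverg} speaks to this overlapping situation; an additional argument (or a stronger label, e.g.\ one recording the whole boundary multiset of each outside component) is needed. The same issue arises in the layer-by-layer pass of the constrained routine.

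Second, the running-time argument does not yield $O(n^3)$. You claim ``the recursion tree has polynomially many nodes,'' but each node spawns $O(n^2)$ pairwise extension calls, so even depth $O(\log n)$ already gives $n^{O(\log n)}$ nodes. To get a polynomial bound one needs an amortization showing that the \emph{total} size of all subinstances at each level is $O(n)$ (e.g.\ because the outside components partition $V\setminus\beta$), and you do not make such an argument. Indeed, the paper never derives $O(n^3)$ either: the only recursion it writes is $T(n)=n\cdot n^{2}\,T(n/2)$ for the quasi-polynomial theorem, and simply dropping the first factor $n$ under \Cref{sym-pat} gives $T(n)=n^{2}T(n/2)=n^{\Theta(\log n)}$, not $O(n^{3})$. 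Your honest flagging of the $x=0$ step as ``where the bulk of the work lies'' is accurate, but without resolving it --- and the two issues above --- the proof remains incomplete, as it does in the paper itself.
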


\begin{theorem}
	If \Cref{sym-pat} is not true, then the algorithm $ GI(G,G',L_k) $ is of quasi-polynomial time $ O(n^{1.5\log n}) $.
\end{theorem}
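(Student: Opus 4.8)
The plan is to bound the running time of $GI(\cdot,\cdot,L_k)$ (together with its constrained variant $GI(\cdot,\cdot,C,L_k)$) by analysing its recursion tree and showing that, once \Cref{sym-pat} is discarded, the only node at which the recursion branches more than polynomially is Step~1 of the unconstrained procedure: there, being unable to certify vertex symmetry, the algorithm falls back to testing the single nailed graph $G(v_1)$ against all $n$ nailed graphs $G'(v'_j)$ (and, deterministically, possibly switching to the dual graph). I will call such a node a \emph{symmetric-branching node}.

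First I would fix a polynomial $p$ and argue that the algorithm satisfies the recurrence
\[ T(n)\ \le\ p(n)\ +\ 2n\cdot T(\lceil n/2\rceil), \]
where $T(n)$ is the worst-case cost on inputs with $n$ vertices. The term $p(n)$ collects all genuinely polynomial work: computing collapse tomographies and collapse patterns (polynomial by the computability lemma proved above), sorting and re-labelling, and the cost of the polynomially branching part of the recursion --- Step~2 of the unconstrained procedure and the layer-by-layer label refinement of the constrained procedure, whose recursive calls go to a proper base subgraph and to the extensions $G^{ex}(v_i,\beta)$. The key point making this part polynomial is the same one used in the preceding theorem for the case where \Cref{sym-pat} holds: the extensions (resp.\ the layers) attached to one base subgraph are vertex-disjoint outside the base, so their sizes sum to $O(n)$ and the associated subcalls aggregate additively rather than multiplicatively, closing to a polynomial (indeed the $O(n^3)$ bound). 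The term $2n\cdot T(\lceil n/2\rceil)$ accounts for a symmetric-branching node: it spawns $\le n$ constrained subcalls $GI(G(v_1),G'(v'_j),\{\langle v_1,v'_j\rangle\})$, at most doubled by the dual-graph alternative, and --- this is the crucial claim --- every subproblem on which a symmetric-branching node can again be reached has at most $\lceil n/2\rceil$ vertices.

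To establish that claim I would show: (i) a graph on which Step~1 branches is vertex regular --- all its collapse tomographies agree, so in particular all layer-$1$ vertex-property multisets have the same cardinality, i.e.\ all degrees coincide --- of some degree $w$; (ii) by \Cref{dual}, whenever $w>n/2$ the algorithm passes to the dual graph, which is regular of degree $n-1-w<n/2$, so we may assume $w\le n/2$; (iii) from such a node, any recursive path that again reaches a symmetric-branching node must first pass through the layer-$1$ subgraph $G^{col}(v_a,1)$ (on $w\le n/2$ vertices) or through a base subgraph or extension, each a proper subgraph sitting in the complement of a nailed vertex's closure, whose vertex count is therefore bounded away from $n$ and, on the branch where a further vertex-indistinguishable graph appears, by $\lceil n/2\rceil$. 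Consequently any root-to-leaf path of the recursion tree contains at most $\log_2 n$ symmetric-branching nodes. Unrolling the recurrence, $T(n)\le (\log_2 n+1)\,p(n)\,(2n)^{\log_2 n}=n^{O(1)}\,2^{\log_2 n}\,n^{\log_2 n}=n^{\log_2 n+O(1)}$; since $\log_2 n=(\ln 2)^{-1}\ln n<1.5\ln n$ for large $n$, this is $O(n^{1.5\log n})$. Finally I would record that the recursion terminates, via the lexicographic measure (number of vertices, then number of collapse layers, then a count measuring how much the current labelling still refines the vertex set).

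The hard part is step (iii): verifying that no hidden super-polynomial blowup survives in the recursive calls on extensions --- which need not individually be much smaller than $n$ --- and that the ``next branching has $\le n/2$ vertices'' guarantee genuinely propagates through base subgraphs, extensions, the dual-graph substitution, and the label refinements simultaneously, so that the polynomial analysis behind the preceding theorem really does upgrade to ``polynomial between symmetric-branching nodes'' in the unconditional setting. A secondary subtlety is pinning the branching factor at $O(n)$ rather than $O(n^2)$ (which would double the exponent): one must argue that fixing the image of a single vertex $v_1$ and ranging only over its possible images $v'_j$ is complete, using that a connected graph's nailed graph retains all of its edges, so that ``$GI(G(v_1),G'(v'_j),\{\langle v_1,v'_j\rangle\})$ holds for some $j$'' is equivalent to ``$G\cong G'$''.
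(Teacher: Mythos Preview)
Your overall strategy---set up a recurrence in which the problem size halves between successive ``symmetric-branching'' nodes (thanks to the dual-graph step) and the branching factor is polynomial---is exactly the paper's. The paper's entire proof is the one-line recurrence $T(n)=n\cdot n^{2}\cdot T(n/2)$, whose unrolling over $\log_2 n$ levels gives $n^{1.5\log_2 n}$ directly; that is where the constant $1.5$ comes from.

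There is, however, a real gap in your version: your branching factor $2n$ is too small. The $n$ constrained calls spawned by Step~1 do not land immediately on problems of size $\le n/2$; each one runs the layer-by-layer refinement of $GI(G,G',C,L_k)$ and Step~2, which in turn issue up to $O(n^{2})$ pairwise extension comparisons, and any of \emph{those} subcalls can again be vertex-indistinguishable. You try to absorb all of this into the additive $p(n)$ by invoking the $O(n^{3})$ analysis of the preceding theorem, but that analysis is for the regime where \Cref{sym-pat} holds and no branching occurs---it bounds total work, not the number of descendants that reach the next symmetric-branching level. Properly counted, the fan-out from one symmetric-branching node to the next is the $n$ Step~1 choices times the $O(n^{2})$ extension pairs, i.e.\ $n^{3}$, which is precisely the paper's recurrence. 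With your $2n$ factor the recurrence solves to $n^{\log_2 n+O(1)}$, and your final paragraph then manufactures the $1.5$ by a base-of-logarithm manoeuvre ($\log_2 n<1.5\ln n$); that step is spurious---the $1.5$ in the statement is $3/2$ from $T(n)=n^{3}T(n/2)$, with $\log=\log_2$, not a conversion between $\log_2$ and $\ln$. Your honest flag that step~(iii) is ``the hard part'' is on point: it is exactly there that the extra $n^{2}$ slips in.
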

\begin{proof}
	The recursion is $ T(n)=n n^2 T(\frac{n}{2}) $ for the worst case, which has the most chance of encounter of graphs that is vertex indistinguishable but not vertex symmetric.
\end{proof}

Suppose now we allow graphs with self-loop and multi-edge. First, we remove all self-loop and multi-edge, and by the algorithm $ GI(G,G') $, we can find an isomorphism. Then we can check this isomorphism for graphs with self-loop and multi-edge. Another method is to use the labeling procedure to remove self-loop and multi-edge before the algorithm for graphs without those. 

With the algorithm for isomorphism, we can construct an algorithm for automorphism without too much effort.

\section{Conclusion}
In this paper, we provide an algorithm for the graph isomorphism problem. We have shown that the graph isomorphism problem is based on the recognition problem of a vertex symmetric graph. \Cref{sym-pat} can solve the latter problem very efficiently. Without this conjecture, our algorithm is of quasi-polynomial time. It is possible to further improve our algorithm. 

Although some techniques in this article might have been already known before, we apologize for the possibility of not pointing out, for our limited knowledge. We plan to write a detailed and also interesting review paper to make it clear and talk all aspects of the graph isomorphism problem, including the research history and its applications. 

Left works: What is the algorithm for directed graphs? Experimental benchmark of algorithms for graph isomorphism. Problems relating to the graph isomorphism. Properties of collapse tomography and collapse pattern. A more refined analysis of our algorithm. Find a counter-example to \Cref{sym-pat} and other conjectures.

%\paragraph{Acknowledgement} 
\bibliography{fang}
\end{document}